\theoremstyle{plain}
\newtheorem{thm}{Theorem}[section]
\newtheorem{theorem}{Theorem}[section]
\newtheorem{lemma}[thm]{Lemma}
\theoremstyle{definition}
\newtheorem{defn}{Definition}[section]
\newtheorem{definition}{Definition}[section]
\newtheorem{remark}[defn]{Remark}
\newtheorem{example}[defn]{Example}
\theoremstyle{remark}
\numberwithin{equation}{section}
\numberwithin{figure}{section}
\DeclareMathOperator{\re}{Re}
\def\I{\mathrm{i}}
\def\D{{\mathbb D}}
\def\R{{\mathbb R}}
\def\C{{\mathbb C}}
\begin{document}

\title{
Vortex motion and geometric function theory: the role of connections}
\author{
Bj\"orn Gustafsson\textsuperscript{1},
}

\date{November 23, 2018}

\maketitle


\begin{abstract} 
We formulate the equations for point vortex dynamics on a closed two dimensional Riemann manifold in the language
of affine and other kinds of connections. The speed of a vortex is then expressed in terms of the difference between an affine connection derived from
the coordinate Robin function and the Levi-Civita connection associated to the Riemannian metric. 

A Hamiltonian formulation of the same dynamics is also given. The relevant  Hamiltonian function consists of two main terms. One of the terms is the well-known
quadratic form based on a matrix whose entries are Green and Robin functions, while the other term describes the energy contribution from those
circulating flows which are not implicit in the Green functions.  One main issue of the paper is the clarification of the somewhat intricate exchanges of energy 
between these two terms of the Hamiltonian. 

\end{abstract}

\noindent {\it Keywords:} Point vortex motion, affine connection, projective connection, Robin function, renormalization, Hamiltonian.

\noindent {\it MSC:} 76B47, 53B05, 53B10, 30F30

 \footnotetext[1]
{Department of Mathematics, KTH, 100 44, Stockholm, Sweden.\\
Email: \tt{gbjorn@kth.se}}

\noindent {\it Acknowledgements:} I am grateful to Stefanella Boatto for discussions, support, encouragement, and
for inspiring papers (together with collaborators) which have made me return to the field of vortex dynamics after many years of absence.


\section{Introduction}

In this paper we study point vortex motion on closed two dimensional Riemannian manifolds, mainly from the
point of view of Riemann surfaces with additional structure. 
This is a quite classical subject, even though equations describing the dynamics of vortices 
in general situations have been made explicit rather recently
(see for example \cite{Hally-1980, Boatto-Koiller-2013, Dritschel-Boatto-2015, Grotta-Ragazzo-Barros-Viglioni-2017}). 
Our studies however have some new features, to be described below.

Since point vortices have infinite energy some kind of renormalization is necessary when passing from the general Euler equations
for the fluid, assumed non-viscous and incompressible, to the dynamics for the vortices, often written in Hamiltonian form. This
renormalization usually involves some kind of Robin function, defined as the first regular term (the constant term) in the expansion of the 
the appropriate Green function at the singular point (the vortex).  

The singular part of the Green function contains the logarithm of the
distance to the pole. Usually this distance is interpreted as the geodesic distance in terms of the metric. However there is also the
possibility to work within a local holomorphic chart and use the Euclidean distance in the range of the coordinate chosen.
That gives a Robin function which is not really a function (because it depends on the choice of coordinate), instead it is a kind of
connection, in this paper called $0$-connection. There are also $1$-connections, which are the same as affine connections in ordinary
differential geometry, and these are even more relevant for vortex motion because they directly give the speed of the vortex, as
measured in a local coordinate. Finally there are $2$-connections, or projective, or Schwarzian, connections. These may play 
a role in describing the motion of vortex pairs, or vortex dipoles, but at present that remains an open question.

We may remind that various kinds connections nowadays play important roles in  mathematical physics, 
like the electromagnetic and Yang-Mills connections in quantum field theory, and the role the stress-energy tensor as being a
Schwarzian connection in conformal field theory and string theory. Therefore a treatment of vortex motion from the point of
view of connections might be of some interest, even though strictly speaking no new results have so far been produced this way.

A second new feature of the paper is that we are careful of treating Riemann surfaces of arbitrary genus, keeping full respect for the 
Kelvin law of conservation of circulations. This law enters the vortex dynamics in a way which cannot been ignored.
For example, in a Hamiltonian description of the dynamics, the Hamiltonian function, which represents the renormalized energy,
consists of two major parts. The first part is the usual quadratic form described by a matrix which has Green and Robin functions
as its entries. 
The second part is a quadratic form which gives the energy of circulating  flows beyond those which are already
present in the Green function. The problem is that there is a nontrivial interaction between the two parts of the Hamiltonian, due to the fact that
the electrostatic type Green function we are working with does not respect the Kelvin law. As the poles (vortices) move
the global circulations implicit in the Green function change, and that has to be compensated by the second term in the Hamiltonian. 

One could imagine replacing the electrostatic Green function by a hydrodynamic Green function, like in the multiply connected planar case.
Effectively this is actually what we do, in a certain sense, but in the Riemann surface case this function will not be single-valued (if genus is strictly positive).
In the same vein, our stream functions will generally not be single-valued. 

The main specific results in the paper are Theorems~\ref{thm:dzkdt} and \ref{thm:hamiltoneq}. The first of these describes the dynamics for a system
of point vortices in terms of the difference between two affine connections, one coming from the metric (then being the same as the classical
Levi-Civita connection) and the other being a connection canonically derived from the Robin function. 
This difference is a differential $1$-form, and it is the same as what is obtained when treating the Robin function in the traditional way.
However, our affine Robin connection contains the information of the compensating circulating flows alluded to above, hence the
result might be new anyway.

The derivation of Theorems~\ref{thm:dzkdt} goes via a weak formulation of Euler's equation, Lemma~\ref{lem:eulerweak}, 
which may look a little strange, and {\it ad hoc}, but it can be motivated on physical grounds.
The ideas behind it go back to some previous unpublished work \cite{Gustafsson-1979, Flucher-Gustafsson-1997}. 

In Theorem~\ref{thm:hamiltoneq} we confirm the vortex dynamics obtained in Theorems~\ref{thm:dzkdt} by an independent derivation
based on renormalization of the kinetic energy of the flow. This gives the appropriate Hamiltonian function, and together with the natural 
symplectic structure on the phase space the dynamics follows automatically.

The paper is much inspired by general treatments in physics and differential geometry such as Flanders \cite{Flanders-1963}, 
Frankel \cite{Frankel-2012}, Arnold and Khesin \cite{Arnold-1978, Arnold-Khesin-1998}, Schutz \cite{Schutz-1980}. 
In particular, we take the liberty to start the paper by giving the general equation in non-viscous fluid dynamics in a quite general form, following  
\cite{Schutz-1980, Frankel-2012}. This is based on Lie derivatives, which is a natural tool in fluid dynamics, 
but after the quite brief general discussion in Section~\ref{sec:general} our treatment is mostly based on differential forms. 

As for the more specific questions of point vortex dynamics our paper owes much to several recent articles. Here
the papers \cite{Boatto-Koiller-2008, Boatto-Koiller-2013} by Boatto and Koiller deserve special mention because of their 
grand perspectives and historical accounts. In addition, \cite{Dritschel-Boatto-2015, Grotta-2017, Grotta-Ragazzo-Barros-Viglioni-2017} and  (the less recent)
 \cite{Grotta-Koiller-Oliva-1994} are highly relevant. Interesting historical reviews and general treatments can also be found in 
\cite{Llewellyn-2011}. Some other important sources are \cite{Hally-1980, Marsden-Weinstein-1983, Steiner-2003, Steiner-2005, Sakajo-Shimizu-2016}.

As for the further organization of the paper, we turn in Section~\ref{sec:Hodge} to geometric function theory (closed Riemann surfaces),
defining first various kinds of Green potentials and harmonic differentials via Hodge theory. Our way of introducing the Green function is slightly different from that
in papers mentioned above, but the final result is of course the same. The one point Green function with compensating uniform background sink comes out 
directly from Hodge decomposition.  Using the fundamental two point Green potential (which does not depend on the metric) we construct the canonical
harmonic differentials. We expose the full details here, despite the theory being very classical (already in  \cite{Weyl-1964}, for example), because we
need the explicit links between the fundamental potentials and differentials on the Riemann surface in order to faithfully implement 
the Kelvin law for circulations.

We proceed in Section~\ref{sec:connections} to define affine and projective connections. In Section~\ref{sec:singular parts} we show how 
leading terms in regular parts of Green functions and standard differentials can be interpreted as such connections. 
This also reveals the philosophy behind connections entering into vortex dynamics. 
Actually connections have been hidden for a long time in fluid dynamics, being implicit in transformation laws for various quantities.
One example is the inhomogenous transformation law for Routh's stream function, which appears already in Lin's classical work \cite{Lin-1941a, Lin-1941b, Lin-1943}.

In Section~\ref{sec:euler} we return to fluid dynamics by giving weak forms, suitable for our purposes, of the Euler equation, and in 
Section~\ref{sec:point vortices} we fuse this with the theory of connections, to obtain the point vortex dynamics. The result
obtained, Theorems~\ref{thm:dzkdt}, is still a little vague in the sense that the appearance of circulations, beyond those present in
the Green function, is somewhat implicit. This we take care of in Section~\ref{sec:hamiltonian}, where we make the Hamiltonian function
fully explicit, and in particular we clarify how the mentioned extra circulations are linked to the locations of the vortices. 
In Section~\ref{sec:hamiltoneq} finally we prove that the vortex dynamics given by Hamilton's equations agrees with that given 
previously (in Theorems~\ref{thm:dzkdt}).


\section{Fluid dynamics on Riemannian manifolds}\label{sec:general}

\subsection{General notations and assumptions}

We consider the dynamics of a nonviscous incompressible fluid on a compact Riemannian manifold
of dimension two. We start by setting up the basic equations in somewhat more generality, in arbitrary dimension, 
following treatments based on using the Lie derivative, as exposed for example in
Frankel \cite{Frankel-2012} and Schutz \cite{Schutz-1980}. These sources also provide the standard notations in differential geometry to be used.
Other treatises in fluid dynamics,  suitable for our purposes, are \cite{Marchioro-Pulvirenti-1994, Arnold-Khesin-1998, Newton-2001}.

Let ${M}$ be a Riemannian manifold of dimension $n$. We consider the flow of a nonviscous  fluid on $M$. Some general notations:
\begin{itemize}

\item $\rho =\rho (x,t)=\text{density} >0$

\item $p=p(x,t)=\text{pressure}$

\item $ds^2= g_{ij} dx^idx^j \quad \text{Riemannian metric (Einstein summation convention in force)}$

\item ${\bf v}={\bf v}(x,t)= v^j \frac{\partial}{\partial x^j} \quad \text{fluid velocity vector field}$

\item ${\bf \nu}=  v_i {d x^i}=g_{ij} v^j dx^i  \quad \text{corresponding one-form (using the metric)}$

\item $\omega = d\nu\quad \text{vorticity field, as a two-form}$

\item $*\omega =  \text{ Hodge star acting on a differential form } \omega$

\item $*1 = {\rm vol}^n, \text{ the volume form on } M$

\item $i_{\bf v} =i({\bf v})=\text{interior multiplication with respect to a vector field }{\bf v}$

\item $\mathcal{L}_{\bf{v}}= \text{ Lie derivative  with respect to a vector field } {\bf v}$

\end{itemize}

The Hodge star and interior multiplication are related by
\begin{equation}\label{Hodgei}
i_{\bf v} {\rm vol}^n=*\nu,
\end{equation}
where ${\bf v}$ and $\nu$ are linked via the metric tensor.
The Lie derivative $\mathcal{L}_{\bf{v}}$ can act on all kinds of tensors. When it acts on differential forms the homotopy formula (or the (Henri) Cartan formula) 
\begin{equation}\label{cartan}
\mathcal{L}_{\bf{v}}=d\circ i_{\bf{v}}+  i_{\bf{v}}\circ d
\end{equation}
is very useful. It shows in addition that $\mathcal{L}_{\bf{v}}$ commutes with $d$.


\subsection{Basic equations}

We collect here the basic equations of fluid dynamics in our context. See \cite{Frankel-2012, Schutz-1980} for details.

\begin{theorem}
\label{thm:dynamics}

The basic equations for a nonviscous fluid in $n\geq 2$ dimensions are

\begin{itemize}

\item[(i)] Euler' equation (conservation of momentum):
\begin{equation}\label{euler}
(\frac{\partial }{\partial t}+\mathcal{L}_{\bf v})({\bf \nu})=d(\frac{1}{2}|{\bf v}|^2)-\frac{dp}{\rho}
\end{equation}
\item[(ii)] Continuity equation (conservation of mass):
\begin{equation}\label{continuityeq}
(\frac{\partial}{\partial t}+\mathcal{L}_{\bf{v}})(\rho \cdot {\rm vol}^n)=0
\end{equation}
\item[(iii)] Adiabatic flow (conservation of entropy, adiabatic case):
$$
(\frac{\partial}{\partial t}+\mathcal{L}_{\bf{v}})({\rm entropy})=0
$$
\item[(iv)] A constitutive  relation between $\rho$ and $p$, for example of the form
\begin{equation}\label{constitutive}
\rho=\rho(p).
\end{equation}
\end{itemize}
\end{theorem}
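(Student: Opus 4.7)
The plan is to treat each of (i)--(iv) as a classical fact of continuum mechanics translated into the invariant differential-form notation set up in this subsection, proceeding one item at a time and borrowing the underlying physical content from standard sources such as Frankel and Schutz.

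For Euler's equation (i), I would start from Newton's second law for a fluid parcel without body forces, $\rho(\partial_t{\bf v}+\nabla_{\bf v}{\bf v})=-\mathrm{grad}\,p$, with $\nabla$ the Levi-Civita connection. Lowering the vector index with the metric turns this into a one-form identity whose right-hand side is $-dp/\rho$, so the whole content of (i) is to show that
\[
g(\nabla_{\bf v}{\bf v},\cdot)\;=\;\mathcal{L}_{\bf v}\nu\;-\;\tfrac12\, d|{\bf v}|^2.
\]
My approach will be coordinate-free: expand $(\mathcal{L}_{\bf v}\nu)({\bf w})={\bf v}(\nu({\bf w}))-\nu([{\bf v},{\bf w}])$ using metric compatibility and torsion-freeness of $\nabla$; after rewriting the Lie bracket the mixed $g({\bf v},\nabla_{\bf v}{\bf w})$ terms cancel and what remains is $g(\nabla_{\bf v}{\bf v},{\bf w})+\tfrac12{\bf w}(|{\bf v}|^2)$, which is the identity wanted. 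As a sanity check one can also reach it through the Cartan formula \eqref{cartan} with $i_{\bf v}\nu=|{\bf v}|^2$ and $d\nu=\omega$, which gives $\mathcal{L}_{\bf v}\nu=i_{\bf v}\omega+d|{\bf v}|^2$, combined with the classical relation $g(\nabla_{\bf v}{\bf v},\cdot)=i_{\bf v}\omega+\tfrac12 d|{\bf v}|^2$.

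For the continuity equation (ii), my plan is to invoke Reynolds' transport theorem: for any $n$-form $\alpha$ and any region $U$ carried by the flow $\phi_t$ of ${\bf v}$,
\[
\frac{d}{dt}\int_{\phi_t(U)}\alpha\;=\;\int_{\phi_t(U)}(\partial_t+\mathcal{L}_{\bf v})\alpha.
\]
Applied to the mass form $\alpha=\rho\,\mathrm{vol}^n$, conservation of mass in every advected region forces the integrand identity \eqref{continuityeq}. Equivalently one expands $\mathcal{L}_{\bf v}(\rho\,\mathrm{vol}^n)=({\bf v}\rho+\rho\,\mathrm{div}\,{\bf v})\,\mathrm{vol}^n$ and recognizes the elementary scalar form $\partial_t\rho+\mathrm{div}(\rho{\bf v})=0$.

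Items (iii) and (iv) are essentially definitional. Equation (iii) is just the invariant rendering of $Df/Dt=0$ applied to specific entropy regarded as an advected scalar, while (iv) is a constitutive closure postulated to make the system determinate rather than something to prove. I therefore expect the only genuinely computational step to be the coordinate-free manipulation in (i), since that is the single place where both metric compatibility and torsion-freeness of the Levi-Civita connection enter essentially; the rest is a direct translation of classical conservation laws into the differential-form calculus of this section.
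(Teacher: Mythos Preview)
Your proposal is correct and matches the paper's treatment in spirit. The paper does not give a formal proof of this theorem at all: it presents the equations as known facts, refers to Frankel and Schutz for details, and then offers only a two-line sketch for item~(i), passing from the standard vector form $\partial_t{\bf v}+({\bf v}\cdot\nabla){\bf v}=-dp/\rho$ to the intermediate form $\partial_t\nu+i_{\bf v}(d\nu)=-\tfrac12 d(i_{\bf v}\nu)-dp/\rho$ and then adding $d(i_{\bf v}\nu)$ to both sides to recognize the Cartan formula. This is exactly your ``sanity check'' route; your more detailed coordinate-free computation via metric compatibility and torsion-freeness of the Levi-Civita connection is a legitimate fleshing-out of the step the paper simply asserts. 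Items (ii)--(iv) receive no justification in the paper beyond the reference, so your Reynolds-transport argument and your remark that (iii), (iv) are definitional are, if anything, more than the paper provides.
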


As for Euler's equation, the standard vector version of it is
$$
\frac{\partial {\bf v}}{\partial t}+({\bf{v}}\cdot\nabla) {\bf v}=-\frac{dp}{\rho},
$$
and rewriting this as an equation for differential forms gives 
\begin{equation}\label{euler3}
\frac{\partial \nu}{\partial t}+i_{\bf v}(d\nu)=-\frac{1}{2}d(i_{\bf v}(\nu))-\frac{dp}{\rho},
\end{equation}
or
$$
\frac{\partial \nu}{\partial t}+d(i_{\bf v}(\nu))+i_{\bf v}(d\nu)=+\frac{1}{2}d(i_{\bf v}(\nu))-\frac{dp}{\rho}.
$$
The latter is the same as (\ref{euler}).

When (\ref{constitutive}) holds the form $dp/\rho$ is exact, namely equal to the exterior derivative of a primitive function of $1/\rho(p)$.
This means that the pressure $p$ becomes a redundant variable:
Euler's equation (\ref{euler}) simply becomes the statement that $(\frac{\partial }{\partial t}+\mathcal{L}_{\bf v})({\bf \nu})$ is exact, say 
\begin{equation}\label{exact}
(\frac{\partial }{\partial t}+\mathcal{L}_{\bf v})({\bf \nu})=d\phi.
\end{equation}
And afterwords $p$ can be recovered, up to an additive constant, from
\begin{equation*}\label{exact2}
\phi= \frac{1}{2}|{\bf v}|^2-\int \frac{dp}{\rho(p)}.
\end{equation*}


On acting by $d$ (exterior derivative) on Euler's equation and using the constitutive relation,
the laws of conservation of vorticity, local and global follow immediately (recall that $\omega=d\nu$):

\begin{theorem}[Helmholtz, Kirchhoff, Kelvin]\label{thm:helmholtz}
\begin{equation}\label{helmholtz}
(\frac{\partial}{\partial t}+\mathcal{L}_{\bf{v}})(\omega)=0,
\end{equation}
and 
\begin{equation}\label{ddtgamma}
\frac{d}{dt} \oint_{\gamma(t)} \nu =0 
\end{equation}
for any closed curve $\gamma(t)$ which moves with the fluid.

Conversely, if (\ref{ddtgamma}) holds for some time dependent form $\nu$ and for every closed $\gamma$
moving with the flow of the corresponding vector field ${\bf v}$, then (\ref{exact}) holds, and with (\ref{constitutive})
in force (\ref{euler}) can be recovered.  
\end{theorem}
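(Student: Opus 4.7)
The plan is to derive (\ref{helmholtz}) and (\ref{ddtgamma}) by applying natural operations to Euler's equation (\ref{euler}), and then to run the circulation argument in reverse for the converse. For (\ref{helmholtz}), I would apply the exterior derivative $d$ to both sides of (\ref{euler}). As already noted just after (\ref{cartan}), Cartan's formula together with $d^2=0$ gives $d\,\mathcal{L}_{\bf v}=\mathcal{L}_{\bf v}\,d$, and $d$ commutes with $\partial/\partial t$, so the left-hand side becomes $(\partial/\partial t+\mathcal{L}_{\bf v})(d\nu)=(\partial/\partial t+\mathcal{L}_{\bf v})(\omega)$. On the right-hand side, $d^2(\tfrac12|{\bf v}|^2)=0$, and by the constitutive relation (\ref{constitutive}) the form $dp/\rho$ equals $dh$ for any primitive $h$ of $1/\rho(p)$, so $d(dp/\rho)=0$ as well. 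Equation (\ref{helmholtz}) follows.

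For Kelvin's law (\ref{ddtgamma}) I would invoke the transport formula for differential forms: for any $k$-chain $c(t)$ carried by the flow of ${\bf v}$,
\[
\frac{d}{dt}\int_{c(t)}\alpha=\int_{c(t)}\Big(\frac{\partial}{\partial t}+\mathcal{L}_{\bf v}\Big)\alpha,
\]
proved via Cartan's formula in the references cited in the paper. Applied with $\alpha=\nu$ and $c(t)=\gamma(t)$, and then using (\ref{euler}), the integrand is an exact 1-form $d(\tfrac12|{\bf v}|^2-h)$, whose integral over the closed curve $\gamma(t)$ vanishes.

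For the converse I would run the same transport formula in reverse. The hypothesis gives $\oint_{\gamma_0}(\partial/\partial t+\mathcal{L}_{\bf v})\nu\big|_{t=t_0}=0$ for every smooth closed curve $\gamma_0$ and every fixed $t_0$, because the time-$t_0$ flow of ${\bf v}$ is a diffeomorphism and so the transported loops $\gamma(t_0)$ exhaust all smooth closed curves. A smooth 1-form on $M$ whose integral vanishes around \emph{every} closed curve (not only the null-homologous ones) is exact, as one sees by defining a primitive through line integration from a base point. Hence $(\partial/\partial t+\mathcal{L}_{\bf v})\nu=d\phi$, which is (\ref{exact}). Finally, under (\ref{constitutive}) the primitive $h(p)=\int dp/\rho(p)$ is strictly monotonic (since $\rho>0$) and hence invertible, so $p$ can be recovered (up to the additive constant in $h$) by solving $h(p)=\tfrac12|{\bf v}|^2-\phi$, and then $dp/\rho=dh$ turns (\ref{exact}) back into (\ref{euler}).

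The only substantive point is in the converse: the hypothesis supplies vanishing circulation only along curves that arise as transports of some initial loop, and one needs to observe that these exhaust all smooth closed curves at each fixed time. This is immediate from the fact that the time-$t_0$ flow is a diffeomorphism, but it deserves to be made explicit, since otherwise one would only deduce closedness rather than exactness of $(\partial/\partial t+\mathcal{L}_{\bf v})\nu$. Everything else is routine bookkeeping with Cartan's formula and the constitutive relation.
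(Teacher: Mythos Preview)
Your proof is correct and follows essentially the same approach as the paper's: apply $d$ to Euler's equation for (\ref{helmholtz}), use the transport formula for (\ref{ddtgamma}), and reverse the latter for the converse. Your explicit observation that the flow diffeomorphism makes transported loops exhaust all closed curves is a useful clarification that the paper leaves implicit under the phrase ``runs the above arguments backwards.''
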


Following \cite{Marchioro-Pulvirenti-1994} we shall refer to (\ref{ddtgamma}) as Kelvin's law for circulations.

\begin{proof}
For (\ref{helmholtz}) one uses that $d$ commutes with $\frac{\partial}{\partial t}$
and with $\mathcal{L}_{\bf{v}}$, and for (\ref{ddtgamma}) that ``$\gamma(t)$ moving with the fluid'',
together with (\ref{exact}), gives that 
$$
\frac{d}{dt} \oint_{\gamma(t)} \nu = \oint_{\gamma(t)} (\mathcal{L}_{\bf{v}}\nu +\frac{\partial \nu}{\partial t})
=\oint_{\gamma(t)}d\phi=0.
$$ 
For the converse statement one simply runs the above arguments backwards.

\end{proof}

The statement that (\ref{ddtgamma}) holds for all $\gamma(t)$ as stated can be thought of as a weak formulation
of the Euler equation. Helmholtz equation (\ref{helmholtz}) alone does not imply Euler's equation (\ref{euler})
if $M$ has nontrivial topology.

More ambitious than the above would be to treat fluid mechanics from the point of view of general relativity, and to
include all kinds of thermodynamic quantities, like what is done in \cite{Schutz-1970}. However, we shall rather go in the 
other direction and simplify to the case of incompressible fluids in two dimensions. 
Eventually we shall specialize to the case of having the vorticity concentrated to finitely many points. 
It follows from (\ref{helmholtz}), (\ref{ddtgamma}) that such a situation is preserved in time if it holds initially, hence the system will 
look like a classical Hamiltonian system having
only finitely many degrees of freedom. From a Riemann surface point of view it becomes a dynamical system for Abelian differentials 
of the third kind (classical terminology, see \cite{Weyl-1964, Farkas-Kra-1992}). Such a differential starts moving once the 
Riemann surface has been provided with a Riemannian metric.


\section{Hodge theory and Green functions}\label{sec:Hodge}


\subsection{Some rudimentary Hodge theory}\label{sec:hodge}

Green functions and differentials with prescribed singularities and periods can most conveniently be introduced via Hodge theory.
We briefly recall the needed concepts here, confining ourselves to the case of two dimensions. For more details  and general notational conventions, see
\cite{Frankel-2012, Warner-1983}. 

We assume that ${M}$ is a compact (closed) oriented Riemannian manifold of dimension $n=2$. We may think of $M$ as a Riemann surface provided with
a Riemannian metric which is compatible with the conformal structure, namely on the form
$$
ds^2=\lambda(z)^2(dx^2+dy^2)=\lambda(z)^2 |dz|^2.
$$
Here $z=x+\I y$ is any local holomorphic coordinate on $M$, and $\lambda>0$ is assumed to be a smooth.

The Laplacian operator acting on a form $\omega$ of any degree
$p=0,1,2$ is $\Delta \omega = *d*d \omega+d*d*\omega$, where the star is the Hodge star operator.  When acting on $1$-forms the Hodge star does not depend on the 
metric (only on the conformal structure), in fact $*dx=dy$, $*dy=-dx$. On $0$-forms it does depend on the metric, for example
$*1=\lambda(z)^2dxdy= {\rm vol}^2$, the volume form on $M$ (invariant area). Similarly for $2$-forms: $*{\rm vol}^2=1$.

A $p$-form $\omega$ is {\it harmonic} if $\Delta \omega=0$. The
only $0$-forms (functions) which are harmonic on all of $M$ are the constant functions, and the only global harmonic $2$-forms are the constant multiples of the 
volume form. The space of harmonic $1$-forms on $M$ has dimension $2\texttt{g}$, where $\texttt{g}$ is the genus of $M$, and by the Hodge-deRham theory
this space is dual in a natural way to the first homology group of $M$. Similarly for harmonic $0$- and $2$-forms. 
Having $\Delta \omega=0$ on all of $M$ is equivalent to that the two equations  $d\omega=0$, $d* \omega=0$ hold. 

The natural inner product on the space of $p$-forms is
\begin{equation}\label{innerproduct}
(\omega_1, \omega_2)_p=\int_M \omega_1\wedge *\omega_2.
\end{equation}
We denote by $L^2(M)_p$ the  Hilbert space of $p$-forms with $(\omega,\omega)_p<\infty$ and inner product (\ref{innerproduct}).
The Hodge theorem \cite{Warner-1983} says that any  $\omega\in L^2(M)_p$
has an orthogonal decomposition 
\begin{equation}\label{hodge}
\omega= \eta +d\mu+\delta  \nu,
\end{equation}
where $\delta=-*d*$ is the coexterior derivative, $\eta$ is a harmonic $p$-form, $\mu$ is a coexact 
$(p-1)$-form,  and $\nu$ is an exact $(p+1)$-form. 
In (\ref{hodge}), the forms $\mu$ and $\nu$ are not uniquely determined, only $d\mu$, $\delta\nu$, and $\eta$ are.
The decomposition can however be made more precise as
\begin{equation}\label{hodge1}
\omega= \eta +d\delta\tau+\delta d  \tau=\eta- \Delta \tau,
\end{equation}
where the $p$-form $\tau$ becomes unique on requiring that it shall be orthogonal to all harmonic forms. 

We shall need the Hodge theorem only for $2$-forms, in which case it makes the one point Green function appear naturally. 
Indeed, if $\omega$ is a $2$-form then one term in the Hodge decomposition disappears immediately, and the rest can be written
\begin{equation}\label{Hodgeomega}
\omega= c_{\omega}\, {\rm vol}^2 -d*dG^\omega.
\end{equation}
Here the first term represents the harmonic $2$-form in the decomposition, with the constant $c_{\omega}$ necessarily being the mean value of $\omega$
with respect to volume:
\begin{equation}\label{tvol}
c_{\omega} = \frac{1}{{\rm vol}^2(M)}\int_M\omega. 
\end{equation}
The second term represents what is in the range of $d$, and the function $G^\omega$ there is to be normalized (as for the
additive level) so that it is orthogonal to all harmonic $2$-forms, i.e., so that
\begin{equation}\label{normalizationG}
\int_M G^\omega \,{\rm vol}^2 =0.
\end{equation}
We then call $G^\omega$ the  {\it Green potential} of $\omega$. 
We see that $-d*dG^\omega=\omega$ provided $\omega$ has net mass zero, otherwise there will be a compensating countermass (multiple of ${\rm vol}^2$), 
namely the first term  in the right member of (\ref{Hodgeomega}).

With $\nu$  the fluid velocity $1$-form, the inner product $(\nu, \nu)_1$ has the interpretation of being the kinetic energy
of the flow. For a function (potential) $u$ we consider the Dirichlet integral $(du,du)_1$ to be its energy.
Thus constant functions have no energy. Similarly, for $2$-forms we consider 
${\rm vol}^2$ to have no energy, and the energy  of $\omega$ is then
defined to be the energy $(dG^\omega, dG^\omega)_1$ of its Green potential $G^\omega$.
For the corresponding quadratic form,  $\mathcal{E}(\omega_1, \omega_2)=(dG^{\omega_1}, dG^{\omega_2})_1$, we get,
after a partial integration and on using (\ref{Hodgeomega}), (\ref{normalizationG}), 
\begin{equation}\label{Eomega}
\mathcal{E}(\omega_1, \omega_2)=\int_M dG^{\omega_1} \wedge *dG^{\omega_2}=\int_M G^{\omega_1} \wedge \omega_2.
\end{equation}


\subsection{The one point Green function}\label{sec:one point green}

The mutual energy and many other concepts in Hodge theory extend in a straight-forward manner to circumstances in which some source distributions have infinite energy.
This applies in particular to the Dirac current $\delta_a$, which we consider as a $2$-form with distributional coefficient, namely defined by the property that
$$
\int_M \delta_a \wedge \varphi =\varphi(a)
$$ 
for every smooth function ($0$-form) $\varphi$.
Certainly $\delta_a$ has infinite energy, but if $a\ne b$, then $\mathcal{E}(\delta_a, \delta_b)$ is still finite and has a natural interpretation:
it is the {\it Green function}: 
\begin{equation}\label{Gab}
G(a,b)=G^{\delta_a}(b)= \mathcal{E}(\delta_a, \delta_b).
\end{equation}
Here the first equality can be taken as a definition, and then the second equality
follows on using (\ref{Hodgeomega}), (\ref{normalizationG}):
\begin{align*}
  \mathcal{E}(\delta_a, \delta_b) &= \int_M dG^{\delta_a}\wedge *dG^{\delta_b}= -\int_M G^{\delta_a}\wedge d*dG^{\delta_b} \\
                                  &= \int_M G^{\delta_a}\wedge \left( \delta_b-\frac{1}{{\rm vol}^2(M)}\,{\rm vol}^2 \right) = G^{\delta_a}(b) = G(a,b).
\end{align*}
This shows in addition that $G(a,b)$ is symmetric.


\subsection{The two point Green function (``fundamental potential'')}

The background homogenous sink $c_\omega {\rm vol}^2$ in Section~\ref{sec:one point green} disappears in the case of several poles with net mass zero. The simplest case is the two
point fundamental potential, with one source and one sink, equally strong. This potential can be traced back to Riemann's original ideas
and it is explicitly used for example in \cite{Schiffer-Spencer-1954}.
In contrast to the one point Green function, the two point Green function does not depend on the Riemannian metric.

To conform with some previous usage we define the fundamental potential without the factor $1/2\pi$ in front of the logarithm.
We may define it in terms of the one point Green function as
\begin{equation}\label{VGGGG}
V(z,w;a,b) = 2\pi \left(G(z,a)-G(z,b)- G(w,a)+G(w,b)\right)=
\end{equation}
$$
=-\log |z-a|+\log |z-b| +\text{harmonic},
$$
where $a,b,w\in M$ are distinct points. It is mainly considered as a function of $z$, normalized so that it vanishes for $z=w$, and it is 
useful for discussing harmonic and holomorphic differentials in general. Indeed,
$$
 -dV(z)-\I*dV(z)=-2\,\frac{\partial V(z,w;a,b)}{\partial z}dz=
$$
$$
=\frac{dz}{z-a}-\frac{dz}{z-b} +\text{analytic}
$$
is the unique Abelian differential of the third kind with poles of residues $\pm 1$ at $z=a,b$ and with
purely imaginary periods. It does not depend on $w$, as can be seen from the decomposition (\ref{VGGGG}).

Obviously $dV$ is exact, while $*dV$ has certain periods around closed curves.
To be precise, if $\alpha$ is a closed curve then the function
\begin{equation}\label{Phioint}
U_\alpha (a,b)=\frac{1}{2\pi} \oint_\alpha *dV(\cdot,w;a,b)
\end{equation}
is, away from $\alpha$, harmonic in each of $a$ and $b$, and it makes a unit additive jump as one of these variables crosses $\alpha$.
The jump with respect to the variable $a$ is $-1$ when $a$ crosses $\alpha$ from the right to the left.
Assuming that $a$ moves along another closed curve $\beta$, which has no further intersections with $\alpha$, it follows that $U_\alpha(a,b)$,
as a compensation for the jump, has to  increase by $+1$ as $a$ runs through the rest of $\beta$. This is what lies behind the period relations 
(\ref{taualpha}), (\ref{taubeta}) below. 

Differentiating  $U_\alpha (a,b)$ with respect to $a$ and ignoring the distributional contribution from the jump
one obtains a differential 
$$
dU_\alpha(a)=\frac{\partial U_\alpha(a,b)}{\partial a}da+\frac{\partial U_\alpha(a,b)}{\partial \bar{a}}d\bar{a},
$$
which is harmonic (with respect to $a$) in all $M$. It is certainly harmonic in $b$ too, but it can be seen from (\ref{VGGGG}) that it actually does not depend on $b$.
For its integral we have
\begin{equation}\label{Uab}
U_\alpha(a,b)=\int_b^a dU_\alpha
\end{equation}
if we integrate along a path that does not intersect $\alpha$.

Let $\{\alpha_1, \dots, \alpha_\texttt{g}, \beta_1,\dots, \beta_\texttt{g}\}$ be representing cycles for a canonical homology basis for $M$ 
such that each $\beta_j$ intersects $\alpha_j$ once from the right to the left and no other intersections occur (see \cite{Farkas-Kra-1992} for details).
The harmonic differentials $dU_{\alpha_j}$, $dU_{\beta_j}$ obtained by the above construction then constitute (when taken in appropriate order
and with appropriate signs) the canonical basis of harmonic differentials associated with the chosen homology basis.
Precisely we have, for $k,j=1,\dots,\texttt{g}$,
\begin{equation}\label{taualpha}
\oint_{\alpha_k} (-dU_{\beta_j})=\delta_{kj}, \quad \oint_{\beta_k} (-dU_{\beta_j})=0,
\end{equation}
\begin{equation}\label{taubeta}
\oint_{\alpha_k} dU_{\alpha_j}=0, \qquad \oint_{\beta_k} dU_{\alpha_j}=\delta_{kj}.
\end{equation}

In terms of the above harmonic differentials the conjugate periods of $V$ can by (\ref{Phioint}), (\ref{Uab}) be explicitly expressed as
\begin{equation}\label{ointdVa}
\frac{1}{2\pi} \oint_{\alpha_j} *dV(\cdot, w;a,b)=\int_b^a dU_{\alpha_j},
\end{equation}
\begin{equation}\label{ointdVb}
\frac{1}{2\pi} \oint_{\beta_j} *dV(\cdot, w;a,b)=\int_b^a dU_{\beta_j}.
\end{equation}
Here the integration in the right member is to be performed along a path that does not intersect the curve in the left member.

One point with the above formulas is that they explicitly exhibit the dependence of the conjugate periods of  $V$
on the locations $a$, $b$ of the poles. This is essential because,
in our applications, $V$ will have the role of  being (part of) a stream function, and the conjugate periods of $V$ 
will enter into the circulations of the flow, which are to be  conserved in time by Kelvin's law (\ref{ddtgamma}).


\section{Affine and projective connections}\label{sec:connections}

Besides differential forms, and tensor fields in general,
affine and projective connections are quantities on Riemann surfaces which are relevant for point vortex motion, so we expand somewhat
on these notions below. The affine connections have the same meanings as in ordinary differential geometry, used to define 
covariant derivatives for example, and they play an important role in many areas of mathematical physics.

Some general references for the kind of connections we are going to consider are  
\cite{Schiffer-Hawley-1962, Gunning-1966, Gunning-1967, Gunning-1978, Gustafsson-Peetre-1989, Gustafsson-Sebbar-2012}.  The ideas go back at least to E.~Cartan.
We shall define them in the simplest possible manner, namely as quantities defined in terms of local holomorphic coordinates and transforming in
specified ways when changing from one coordinate to another.

Let $\tilde{z}=\varphi(z)$ represent a holomorphic local change of complex coordinate on $M$ and define three nonlinear differential expressions
$\{\cdot,\cdot\}_k$, $k=0,1,2$, by
\begin{align*}
\{\tilde{z},z\}_0 &= \log \varphi'(z)&&=-2 \log \frac{1}{\sqrt{\varphi'}}\\
\{\tilde{z},z\}_1 &= (\log \varphi'(z))' =\frac{\varphi''}{\varphi'}&&= -2 \sqrt{\varphi'}\,\,(\frac{1}{\sqrt{\varphi'}})'\\
\{\tilde{z},z\}_2 &= (\log \varphi'(z))''-\frac{1}{2}((\log \varphi'(z))')^2=\frac{\varphi'''}{\varphi'}-\frac{3}{2}(\frac{\varphi''}{\varphi'})^2 &&= -2 \sqrt{\varphi'}\,\,(\frac{1}{\sqrt{\varphi'}})''
\end{align*}
The last expression is the \emph{Schwarzian derivative} of $\varphi$. For
$\{\tilde{z},z\}_0$ there is an additive indetermincy of $2\pi \I$, so
actually only its real part, or exponential, is completely well-defined.

Alternative definitions: With $\tilde{z}=\varphi(z)$, $\tilde{a}=\varphi(a)$, we have
\begin{align*}
\{\tilde{a},a\}_0&=\lim_{z\to a} \log\frac{\tilde{z}-\tilde{a}}{z-a},\\
\{\tilde{a},a\}_1&=2\lim_{z\to a} \frac{\partial}{\partial z}\log\frac{\tilde{z}-\tilde{a}}{z-a},\\
\{\tilde{a},a\}_2&=6\lim_{z\to a} \frac{\partial^2}{\partial z\partial a}\log\frac{\tilde{z}-\tilde{a}}{z-a}.
\end{align*}

\begin{remark}
Both $z$ and $a$ (and $\tilde{z}$ and $\tilde{a}$) will be viewed as local variables, but $a$ will be slightly ``less variable'' than $z$ in the sense
that it will be used as a local base point for Taylor expansions.
\end{remark}

The following chain rules hold, if $z$ depends on $w$ via an intermediate variable $u$:

\begin{equation*}
\{z,w\}_k (dw)^k=\{z,u\}_k (du)^k+\{u,w\}_k (dw)^k \quad (k=0,1,2).
\end{equation*}
In particular,
\begin{equation*}
\{z,w\}_k (dw)^k=-\{w,z\}_k (dz)^k \quad (k=0,1,2).
\end{equation*}
It turns out that the three operators $\{\cdot,\cdot\}_k$, $k=0,1,2$, are unique in having properties as above,
i.e., one can not go on with anything similar for $k=3,4,\dots$. See \cite{Gunning-1966, Gunning-1967} for details. 

\begin{definition}\label{def:affine}
An {\it affine connection} on $M$ is an object which is represented
by local differentials $r(z)dz$,
$\tilde{r}(\tilde{z})d\tilde{z}$,\dots (one in each coordinate
variable, and not necessarily holomorphic) glued together according to the rule
\begin{equation*}
\tilde{r}({\tilde{z}}){d\tilde{z}}=r(z){dz}-\{\tilde{z},z\}_1\,{dz}.
\end{equation*}
\end{definition}

\begin{definition}\label{def:projective}
A {\it projective connection} on $M$ consists of local
quadratic differentials $q(z)(dz)^2$,
$\tilde{q}(\tilde{z})(d\tilde{z})^2$, \dots, glued together
according to
\begin{equation*}
\tilde{q}({\tilde{z}})({d\tilde{z}})^2=q(z)({dz})^2-\{\tilde{z},z\}_2\,({dz})^2.
\end{equation*}
\end{definition}

One may also consider $0$-{\it connections},
quantities defined up to multiples of $2\pi \I$ and which transform
according to
\begin{equation*}
\tilde p (\tilde z)= p(z) -\{\tilde z, z\}_0.
\end{equation*}
This means exactly that $e^{p(z)}$ is well-defined and transforms as
differential of order one, and its absolute value $\lambda(z) =e^{\,{\rm Re\,} p(z)}$ transforms as
the coefficient a {Riemannian metric} when this is written as
\begin{equation*}
ds =\lambda(z){|dz|}.
\end{equation*}

If the Riemann surface initially is provided with a Riemannian metric $ds =\lambda(z){|dz|}$, then
there is a natural affine connection associated to it  by
\begin{equation}\label{rlog}
r(z)= 2 \frac{\partial }{\partial z}\log \lambda(z)=\frac{\partial \log\lambda}{\partial x}-\I \frac{\partial \log \lambda}{\partial y}.
\end{equation}
This is the same as the Levi-Civita connection in general tensor analysis,
and the real and imaginary parts, made explicit above, coincide (up to sign) with the components of the classical Christoffel symbols $\Gamma_{ij}^k$.
Compare formulas in  \cite{Grotta-Ragazzo-Barros-Viglioni-2017}. 

Independent of any metric, an affine connection $r$ gives rise to a projective connection $q$ by
\begin{equation}\label{qr}
q= \frac{\partial r}{\partial z}- \frac{1}{2}r^2.
\end{equation}
This $q$ is sometimes called the ``curvature" of $r$ (see \cite{Dubrovin-1993}). That curvature is however not the same as the Gaussian curvature in case
$r(z)$ comes form a metric. 

In the presence of an affine connection one can define,
for every half integer $k\in\frac{1}{2}{\mathbb{Z}}$, a {covariant derivative}
$\nabla_k$ taking $k$:th order differentials to $(k+1)$:th order
differentials by $\phi(dz)^k\mapsto (\nabla_k \phi)(dz)^{k+1}$, where
\begin{equation*}\label{eq:affderivative}
\nabla_k \phi = \frac{\partial\phi}{\partial z}-kr\phi.
\end{equation*}
The covariance means that
$$
\phi(dz)^k=\tilde{\phi}(d\tilde{z})^k \Longrightarrow
(\nabla_k\phi)(dz)^{k+1}=(\tilde{\nabla}_k\tilde{\phi})(d\tilde{z})^{k+1}
$$

A projective connection also allows for certain covariant derivatives: for each $m=0, 1, 2,
\dots$ there is a linear differential operator $\Lambda_m$ taking
differentials of order $\frac{1-m}{2}$ to differentials of order
$\frac{1+m}{2}$:
$$
\Lambda_m: \phi (dz)^{\frac{1-m}{2}}\mapsto (\Lambda_m\phi)
(dz)^{\frac{1+m}{2}}.
$$
In case the projective connection comes from an affine connection as in (\ref{qr}) these $\Lambda_m$ are given by 
\begin{equation*}\label{eq:Lnablanabla}
\Lambda_1=\nabla_0, \quad \Lambda_2
=\nabla_{\frac{1}{2}}\nabla_{-\frac{1}{2}}, \quad
\Lambda_3=\nabla_{1}\nabla_{0}\nabla_{-1},\quad {\rm etc}.
\end{equation*}

\begin{example}
For $m=2$ we have, with $q$ given by (\ref{qr}),
$$
\Lambda_2 (\phi) =(\frac{\partial}{\partial z}-\frac{r}{2})(\frac{\partial}{\partial z}+\frac{r}{2})\phi=\frac{\partial^2\phi}{\partial z^2}+\frac{1}{2}q\phi.
$$
The right member here is what $\Lambda_2 (\phi)$ looks like even if $q$ does not come from any affine connection.
We refer to \cite{Gustafsson-Peetre-1989, Gustafsson-Sebbar-2012} for further details and more examples.
\end{example}


\section{Behavior of singular parts under changes of coordinates}\label{sec:singular parts}

The different kinds of connections discussed above appear naturally when studying
expansions of singular parts of Green functions and stream functions under changes of coordinates.

To give an intuitive motivation, consider a point vortex with location $a$, moving in a planar domain,
possibly together with other vortices. The stream function $\psi$ of the flow (see Section~\ref{sec:point vortices}),
will then have a local expansion starting (up to a constant factor depending on the strength of the vortex)
$$
\psi(z)=\log|z-a|-c_0(a) -\mathcal{O}(|z-a|).
$$
Similarly, the expansion of the analytic completion $\nu+\I*\nu=2\I\frac{\partial\psi}{\partial{z}}dz$ of the flow $1$-form $\nu$ will look like (again up to a constant factor)
$$
f(z)dz=\frac{dz}{z-a} -c_1(a)dz -\mathcal{O}(z-a).
$$ 

One step further, one may consider vortex dipoles, i.e., two vortices of opposite strengths infinitesimally close to each other. The corresponding flow is
obtained by taking the derivative $\partial/\partial a$ of the above flow $1$-form and then multiply
this by a complex factor to adjust for the orientation of the dipole. Taking this factor to be one, for simplicity, we obtain something of the form
$$
F(z)dzda= \frac{dzda}{(z-a)^2}-2c_2(a)dzda + \mathcal{O}(z-a),
$$
and by the same argument as above the motion of the dipole should then be determined by the coefficient $2c_2(a)$. Lemma~\ref{lem:connections}
below says that this coefficient is (essentially) a projective connection. However, vortex dipoles are very singular, and have to be renormalized in a special way
in order to prevent them having infinite speed. Compare \cite{Llewellyn-Nagem-2013}.

Now the general statement in this context is the following.

\begin{lemma}\label{lem:connections}
Let $\tilde{z}=\varphi(z)$ be a local conformal map representing a change of coordinates near a point $z=a$, and set $\tilde{a}=\varphi (a)$.
Let $\psi(z)$ ($=\psi(z,a)$) be a locally defined harmonic function with a logarithmic pole at $z=a$,  similarly $f(z)dz$ a meromorphic differential with a simple pole 
at the same point, and $F(z)dzda$ a double differential with a pure (residue free) second order pole.
Precisely, we assume the following local forms, in the $z$ and $\tilde{z}$ variables:
\begin{align*}
\psi(z)&= \log|z-a|-c_0(a) + \mathcal{O}(|{z}-{a}|)\\
&=\log|\tilde{z}-\tilde{a}|-\tilde{c}_0(\tilde{a})+ \mathcal{O}(|\tilde{z}-\tilde{a}|),\\
f(z)dz&= \frac{dz}{z-a}-c_1(a)dz + \mathcal{O}(z-a)\\
&=\frac{d\tilde{z}}{\tilde{z}-\tilde{a}}-\tilde{c}_1(\tilde{a})d\tilde{z}+ \mathcal{O}(\tilde{z}-\tilde{a}),\\
F(z)dzda&= \frac{dzda}{(z-a)^2}-2c_2(a)dzda + \mathcal{O}(z-a)\\
&=\frac{d\tilde{z}d\tilde{a}}{(\tilde{z}-\tilde{a})^2}-2\tilde{c}_2(\tilde{a})d\tilde{z}d\tilde{a}+ \mathcal{O}(\tilde{z}-\tilde{a}).
\end{align*}
Then, as functions of the location of the singularity and up to constant factors, $c_0$ transforms as the real part of a $0$-connection,
$c_1$ as an affine connection and $c_2$ as a projective connection:
$$
\tilde{c}_0(\tilde{a})= c_0(a)+\re\{\tilde{a},a\}_0, 
$$
$$
\tilde{c}_1(\tilde{a}) d\tilde{a}= c_1 (a)da +\frac{1}{2} \{\tilde{a}, a\}_1 da,
$$
$$
2\tilde{c}_2(\tilde{a}) d\tilde{a}^2= 2c_2 (a)da^2 +\frac{1}{6} \{\tilde{a}, a\}_2 da^2.
$$
The first statement is most conveniently expressed by saying that
$$
ds=  e^{-\tilde{c}_0(\tilde{a})}|d\tilde{a}|=e^{-c_0(a)}|da|
$$
defines a conformally invariant metric.
\end{lemma}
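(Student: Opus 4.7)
My plan is to exploit the fact that $\psi$, $f(z)dz$, and $F(z)dz\,da$ are coordinate-independent geometric objects (a function, a $1$-form, and a bidifferential respectively). Writing the identities
\[
\psi(z)=\tilde{\psi}(\tilde{z}),\qquad f(z)dz=\tilde{f}(\tilde{z})d\tilde{z},\qquad F(z)dz\,da=\tilde{F}(\tilde{z})d\tilde{z}\,d\tilde{a},
\]
substituting $\tilde{z}=\varphi(z)$, $\tilde{a}=\varphi(a)$, and comparing the resulting expansions in $(z-a)$ yields the three transformation rules after matching the constant (regular) terms. The main Taylor ingredient in every case is
\[
\varphi(z)-\varphi(a)=\varphi'(a)(z-a)\Bigl(1+\tfrac{\varphi''(a)}{2\varphi'(a)}(z-a)+\tfrac{\varphi'''(a)}{6\varphi'(a)}(z-a)^2+\cdots\Bigr).
\]

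For $c_0$ I would take the logarithm of the modulus: $\log|\varphi(z)-\varphi(a)|=\log|\varphi'(a)|+\log|z-a|+O(|z-a|)$. Substituting into the $\tilde{z}$-expansion of $\psi$ and comparing with the $z$-expansion immediately gives $\tilde{c}_0(\tilde{a})=c_0(a)+\log|\varphi'(a)|=c_0(a)+\re\{\tilde{a},a\}_0$. The conformal invariance of $e^{-c_0(a)}|da|$ then follows from $|d\tilde{a}|=|\varphi'(a)||da|$.

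For $c_1$ I would compute $\tilde{f}(\varphi(z))\varphi'(z)$ and collect terms up to order $(z-a)^0$. The pole $1/(\tilde{z}-\tilde{a})$ pulls back via the above Taylor expansion to $(z-a)^{-1}+\tfrac12 \varphi''(a)/\varphi'(a)+O(z-a)$ after multiplication by $\varphi'(z)$, while $-\tilde{c}_1(\tilde{a})d\tilde{z}$ contributes $-\tilde{c}_1(\tilde{a})\varphi'(a)dz$ at $z=a$. Identifying constant coefficients gives $\tilde{c}_1(\tilde{a})\varphi'(a)=c_1(a)+\tfrac12\{\tilde{a},a\}_1$, and multiplying by $da$ converts $\varphi'(a)\,da$ into $d\tilde{a}$, producing the stated formula.

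For $c_2$ the same strategy works but requires pushing the expansion of $\varphi'(z)\varphi'(a)/(\varphi(z)-\varphi(a))^2$ out to order $(z-a)^2$; this is the main technical obstacle, since the $(z-a)^1$ coefficient must vanish (it does, by a cancellation between the numerator and denominator) and the $(z-a)^2$ coefficient must reproduce $\tfrac16\{\tilde{a},a\}_2$. The cleanest way to do this is to write
\[
\frac{\varphi'(z)\varphi'(a)}{(\varphi(z)-\varphi(a))^2}=\frac{1}{(z-a)^2}+A(a)+B(a)(z-a)+\cdots,
\]
expand both numerator and denominator through third order in $(z-a)$, and simplify. A direct calculation produces
\[
A(a)=\tfrac{1}{6}\bigl(\varphi'''(a)/\varphi'(a)-\tfrac{3}{2}(\varphi''(a)/\varphi'(a))^2\bigr)=\tfrac{1}{6}\{\tilde{a},a\}_2,
\]
which is exactly the Schwarzian. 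Combined with $-2\tilde{c}_2(\tilde{a})\varphi'(z)\varphi'(a)\big|_{z=a}=-2\tilde{c}_2(\tilde{a})\varphi'(a)^2$ and the relation $(d\tilde{a})^2=\varphi'(a)^2(da)^2$, the claimed identity for $c_2$ falls out. Throughout, the chain rules already listed in the excerpt (and invariance of the three $\{\cdot,\cdot\}_k$ under composition) guarantee consistency when several coordinate changes are composed, so no additional verification is needed.
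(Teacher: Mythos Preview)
Your proposal is correct and follows essentially the same route as the paper's proof: both arguments reduce to the Taylor expansion of $\varphi(z)-\varphi(a)$ and a comparison of the constant (regular) terms in the two coordinate expansions. The only cosmetic difference is that the paper clears the pole by multiplying through by $\tilde{z}-\tilde{a}$ (respectively $(\tilde{z}-\tilde{a})^2$) before comparing, whereas you expand $\varphi'(z)/(\varphi(z)-\varphi(a))$ and $\varphi'(z)\varphi'(a)/(\varphi(z)-\varphi(a))^2$ directly; the arithmetic is identical, and in particular your identification of the $(z-a)^0$ coefficient of the latter with $\tfrac16\{\tilde a,a\}_2$ is exactly the Schwarzian computation the paper performs. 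Your closing remark about chain rules under composition is true but not needed for the lemma itself.
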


\begin{remark}
It is actually not necessary for the conclusions of the lemma that $\psi$, $fdz$, $Fdzda$ are harmonic/analytic away from the singularity,
it is enough that the local forms of the singularity and constant terms given as above hold.
\end{remark}

\begin{proof}[Proof {\rm (of Lemma)}]
The proof consists of straight-forward substitutions, but let us for clarity give a few details. 
The needed ingredients are the equation $d\tilde{z}=\varphi'(z)dz$ and the Taylor series of $\varphi$,
which we write as
$$
\frac{\tilde{z}-\tilde{a}}{z-a}=\varphi'(a)+ \frac{1}{2}\,\varphi''(a)(z-a)+\frac{1}{6}\varphi'''(a)(z-a)^2+\dots.
$$
The statement for $\psi$ is immediate and,  for $fdz$  one obtains, after multiplying
the two given expressions for $fdz$ by $\tilde{z}-\tilde{a}$ and dividing by $dz$, that  
$$
\varphi'(a)+\frac{1}{2}\,\varphi''(a)(z-a) -c_1(a)(\tilde{z}-\tilde{a})=
$$
$$
= \varphi'(z)-\varphi'(z)\tilde{c}_1(\tilde{a})(\tilde{z}-\tilde{a})+\mathcal{O}((z-a)^2)
$$
as $z\to a$. After dividing also with $z-a$ and letting $z\to a$ one obtains
$$
\tilde{c}_1(\tilde{a})\varphi'(a)=c_1(a)+\frac{\varphi''(a)}{2\varphi'(a)},
$$
which is the desired result.

As for $Fdzda$ one uses also that $d\tilde{z}d\tilde{a}=\varphi'(z)\varphi' (a)dzda$ and that
$$
(\frac{\tilde{z}-\tilde{a}}{z-a})^2=\varphi'(a)^2+ \varphi'(a)\varphi''(a)(z-a)+
$$
$$
+\left(\frac{1}{4}\varphi''(a)^2+\frac{1}{3}\varphi'(a)\varphi'''(a)\right)(z-a)^2+\mathcal{O}((z-a)^3).
$$
Inserting this into the two given expressions for $Fdzda$ and comparing the Taylor expansions,
after multiplication with $(\tilde{z}-\tilde{a})^2$ and division with $dzda$, gives that the constant
and linear  terms match automatically, while matching of the $(z-a)^2$ terms requires that
$$
2\tilde{c}_2(\tilde{a})\varphi'(a)^2=2c_2(a)+\frac{1}{6}\left( \frac{\varphi'''(a)}{\varphi'(a)}  -\frac{3}{2}\frac{\varphi''(a)^2}{\varphi'(a)^2}\right).
$$
And this is exactly the desired identity.

\end{proof}

\begin{remark}\label{rem:nonpure}
For a ``nonpure'' second order pole, with expansion
\begin{align*}
Fdzda&= \frac{dzda}{(z-a)^2} +\frac{b_2(a)dzda}{z-a}-2c_2(a)dzda + \mathcal{O}(z-a)\\
&=\frac{d\tilde{z}d\tilde{a}}{(\tilde{z}-\tilde{a})^2} +\frac{\tilde{b}_2(\tilde{a})d\tilde{z}d\tilde{a}}{\tilde{z}-\tilde{a}}-2\tilde{c}_2(\tilde{a})d\tilde{z}d\tilde{a}+ \mathcal{O}(\tilde{z}-\tilde{a}), 
\end{align*}
the new coefficient $b_2$ transforms as an ordinary differential:
$$
\tilde{b}_2(\tilde{a})d\tilde{a}=b_2(a)da,
$$
while $c_2(a)$ continues to transform as an projective connection (up to a constant). The proof is again straight-forward. 
\end{remark}

We may adapt the above to the Green function $G(z,a)=G^{\delta_a} (z)$, despite it  is not harmonic in $z$
(because of the compensating background flow).  First we write
\begin{equation}\label{GlogH}
G(z,a)=\frac{1}{2\pi} (-\log |z-a| +H(z,a))
\end{equation}
and then expand the regular part as
\begin{equation}\label{greentaylor}
H(z,a)= h_0 (a)+\frac{1}{2}\left(h_1(a)(z-a)+ \overline{h_1 (a)} (\bar{z}-\bar{a})\right)+
\end{equation}
$$
+\frac{1}{2}\left(h_{2}(a)(z-a)^2+\overline{h_{2}(a)}(\bar{z}-\bar{a})^2\right)+h_{11}(a)(z-a)(\bar{z}-\bar{a})+ \mathcal{O}(|z-a|^3).
$$

The coefficient $h_0(a)$ is one example of a {\it Robin function}, more precisely a {\it coordinate Robin function}.
The distance $|z-a|$ to the singularity is here measured using a coordinate chart, this explaining why $h_0(a)$
is not a true function, only a kind of connection. More common is to use the invariant (geodesic) distance 
in the logarithm, and then the Robin function really becomes a function.   

\begin{lemma}\label{lem:taylor}
Some relations between the coefficients in (\ref{greentaylor}) are
\begin{equation}\label{c1c0}
h_1(a)=\frac{\partial h_0 (a)}{\partial a},
\end{equation}
\begin{equation}\label{c2c1c0}
h_2(a)=\frac{1}{2}\frac{\partial h_1(a)}{\partial a}-\{\frac{\partial^2 H(z,a)}{\partial z \partial a}\}_{z=a}.
\end{equation}\end{lemma}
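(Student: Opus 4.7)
My plan rests on two elementary observations. First, the Taylor coefficients $h_j(a)$ can be read off as derivatives of $H$ in the $z$-slot evaluated at $z=a$: from the expansion one reads $h_0(a)=H(a,a)$, $h_1(a)=2\,\partial_z H(z,a)|_{z=a}$, and $h_2(a)=\partial_z^2 H(z,a)|_{z=a}$. Second, the Green function is symmetric, $G(z,a)=G(a,z)$, and since $-\log|z-a|$ is itself symmetric in its two arguments, the same symmetry passes to the regular part: $H(z,a)=H(a,z)$.

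For (\ref{c1c0}), I would start from $h_0(a)=H(a,a)$ and apply the chain rule for the diagonal restriction, viewing $H$ as a function of two complex variables. Writing $\partial_1,\partial_2$ for differentiation with respect to the first and second slot, one gets
\[
\frac{\partial h_0(a)}{\partial a}=\partial_1 H(z,a)|_{z=a}+\partial_2 H(z,a)|_{z=a}.
\]
The symmetry $H(z,a)=H(a,z)$ gives $\partial_1 H(z,a)=\partial_2 H(a,z)$, and in particular $\partial_1 H(a,a)=\partial_2 H(a,a)$. Hence the two terms on the right are equal and the sum is $2\partial_z H(z,a)|_{z=a}=h_1(a)$, which is exactly (\ref{c1c0}).

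For (\ref{c2c1c0}) no further symmetry input is needed; it is purely a Taylor identity. Apply $\partial/\partial a$ to the relation $\tfrac{1}{2}h_1(a)=\partial_z H(z,a)|_{z=a}$, again using the diagonal chain rule:
\[
\frac{1}{2}\,\frac{\partial h_1(a)}{\partial a}=\partial_z^2 H(z,a)\big|_{z=a}+\frac{\partial^2 H(z,a)}{\partial z\,\partial a}\bigg|_{z=a}.
\]
Since $\partial_z^2 H(z,a)|_{z=a}=h_2(a)$, rearranging gives exactly (\ref{c2c1c0}).

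The only point that needs care, and arguably the main obstacle, is bookkeeping in the diagonal chain rule: $h_j(a)$ are functions of $(a,\bar a)$ obtained by restricting $H(z,a)$ to $z=a$ and then differentiating in $z$, so one must cleanly distinguish the partial derivative of $H$ in its first holomorphic slot from the total $a$-derivative after restriction. Once that distinction is made, both identities follow immediately from the two observations above.
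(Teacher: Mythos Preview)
Your proof is correct and follows essentially the same route as the paper's own argument: use the symmetry $H(z,a)=H(a,z)$ together with the diagonal chain rule to get (\ref{c1c0}), and then differentiate $\tfrac12 h_1(a)=\partial_z H(z,a)|_{z=a}$ once more in $a$ and identify $\partial_z^2 H|_{z=a}=h_2(a)$ to obtain (\ref{c2c1c0}). The paper states this more tersely (``differentiating (\ref{greentaylor}) with respect to $z$ and $a$ and letting $z\to a$''), but the content is the same; the PDE for $G$ that the paper also records is only needed for the further relation (\ref{c11lambda}), which is not part of the statement you were asked to prove.
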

\begin{equation}\label{c11lambda}
h_{11}(a)=\frac{1}{2}\frac{\partial h_1(a)}{\partial \bar{a}}-\{\frac{\partial^2 H(z,a)}{\partial z \partial \bar{a}}\}_{z=a}= \frac{\pi\lambda(a)^2}{2 {\rm vol}^2 (M)},
\end{equation}

\begin{proof}
The proof depends on $G(z,a)$, and hence $H(z,a)$, being symmetric in the two variables and that, as an instance of (\ref{Hodgeomega}),
\begin{equation}\label{ddGdelta}
-d*d G(\cdot, a)=\delta_a-\frac{1}{{\rm vol}^2(M)}\, {\rm vol}^2.
\end{equation}
Indeed, (\ref{c1c0}) is a consequence of
$$
h_1(a)= 2\frac{\partial}{\partial z}\big|_{z=a} H(z,a) =  \frac{\partial }{\partial a} H(a,a),
$$
and (\ref{c2c1c0}) is obtained by differentiating (\ref{greentaylor}) with respect to $z$ and $a$ and letting $z\to a$.
The final term in (\ref{c11lambda}) comes out on comparing
$$
\{\frac{\partial^2 H(z,a)}{\partial z \partial\bar{z}}\}_{z=a} = h_{11}(a)
$$
with the equation for $H(z,a)$ which is embodied in (\ref{ddGdelta}).

\end{proof}

\begin{example}\label{ex:sphere}
In the special case of  the Riemann sphere, identification with the extended plane
$M= \C\cup \{\infty\}$ under stereographic projection renders the spherical metric on the form
$$
ds^2=\frac{4|dz|^2}{(1+|z|^2)^2}.
$$
This means that $\lambda(z)^2={4}/{(1+|z|^2)^2}$, ${\rm vol^2}(M)=4\pi$, and we have
$$
G(z,a)=-\frac{1}{4\pi}\left(   \log \frac{|z-a|^2}{(1+|z|^2)(1+|a|^2)} +1\right).
$$
Hence
$$
H(z,a)=\frac{1}{2}\log (1+|z|^2)+\frac{1}{2}\log (1+|a|^2)-\frac{1}{2},
$$
so that, referring to (\ref{greentaylor}),
$$
h_0(a)=H(a,a)= \log (1+|a|^2) -\frac{1}{2},
$$
$$
h_1(a)=2\frac{\partial }{\partial z}|_{z=a} H(z,a)=\frac{\bar{a}}{1+|a|^2},  
$$
$$
h_2(a)=-\frac{\bar{a}^2}{2(1+|a|^2)^2}.
$$
$$
h_{11}(a)=\frac{\partial^2 }{\partial z\partial \bar{z}}|_{z=a} H(z,a)=\frac{{1}}{2(1+|a|^2)^2}.
$$
The metric associated to the Green function becomes
$$
ds^2 = e^{-2h_0(a) |da|^2}=\frac{{e}\,|da|^2}{(1+|a|^2)^2},
$$
hence equals the spherical metric, up to a factor.

The fundamental potential (\ref{VGGGG}) on the Riemann sphere is, independently of the metric,
$$
V(z,w;a,b)=-\log\big|\frac{(z-a)(w-b)}{(z-b)(w-a)}\big|,
$$
i.e., (minus) the logarithm of the modulus of the cross ratio. 

\end{example}

The transformation properties of the coefficients $h_j(a)$ are closely related
to those appearing in Lemma~\ref{lem:connections}: 

\begin{lemma}\label{lem:gammaconnections}
Under a local holomorphic change 
$\tilde{z}=\varphi(z)$ of coordinates, with $\tilde{a}=\varphi (a)$, we have
\begin{equation}\label{h0}
\tilde{h}_0(\tilde{a})= h_0(a)+\re\{\tilde{a},a\}_0, 
\end{equation}
\begin{equation}\label{h1}
\tilde{h}_1(\tilde{a}) d\tilde{a}= h_1 (a)da +\frac{1}{2} \{\tilde{a}, a\}_1 da,
\end{equation}
\begin{equation}\label{h2h1}
\big(\frac{\partial \tilde{h}_1(\tilde{a})}{\partial \tilde{a}}-2\tilde{h}_2(\tilde{a}) \big)d\tilde{a}^2
= \big(\frac{\partial h_1(a)}{\partial a}-2{h}_2(a) \big)d{a}^2 +\frac{1}{6} \{\tilde{a}, a\}_2 da^2,
\end{equation}
\begin{equation}\label{h11}
\tilde{h}_{11}(\tilde{a})|d\tilde{a}|^2=h_{11}(a)|da|^2.
\end{equation}
\end{lemma}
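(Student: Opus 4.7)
The plan is to exploit the fact that the Green function $G(z,a)$ is a genuine function on $M\times M$ off the diagonal, so $\tilde G(\tilde z, \tilde a) = G(z,a)$ under the coordinate change $\tilde z = \varphi(z)$, while its decomposition $G = \frac{1}{2\pi}(-\log|z-a| + H)$ splits off a coordinate-dependent logarithmic piece. Rearranging produces the master identity
\[
\tilde H(\tilde z, \tilde a) - H(z,a) \;=\; \log\Big|\frac{\tilde z - \tilde a}{z-a}\Big| \;=\; \tfrac12 K(z,a) + \tfrac12 \overline{K(z,a)},
\]
where $K(z,a) = \log\big((\tilde z - \tilde a)/(z-a)\big)$ is well-defined and holomorphic in each variable on a neighbourhood of the diagonal, since $\varphi'(a)\neq 0$ there. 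All of (\ref{h0})--(\ref{h11}) will be read off from this single identity by extracting suitable Taylor coefficients in $z$ and $a$ at $z=a$ and invoking the limit formulas for $\{\tilde a,a\}_k$ displayed at the start of Section~\ref{sec:connections}.

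Setting $z=a$ gives $K(a,a) = \log\varphi'(a) = \{\tilde a, a\}_0$, whence $\tilde h_0(\tilde a) - h_0(a) = \re\log\varphi'(a) = \re\{\tilde a,a\}_0$, which is (\ref{h0}). To obtain (\ref{h1}) I would differentiate (\ref{h0}) once in $a$ using Lemma~\ref{lem:taylor}'s identity $h_1 = \partial h_0/\partial a$: the chain rule converts $\partial \tilde h_0(\tilde a)/\partial a$ into $\tilde h_1(\tilde a)\varphi'(a)$, while $\partial(\re\log\varphi'(a))/\partial a = \tfrac12\varphi''(a)/\varphi'(a) = \tfrac12\{\tilde a,a\}_1$, and multiplying by $da$ and using $d\tilde a = \varphi'(a)\,da$ gives the transformation rule. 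The identity (\ref{h11}) is a direct consequence of (\ref{c11lambda}): since $\lambda(z)|dz|$ is the Riemannian line element, $\lambda(a)^2|da|^2$ is invariant, and dividing by the coordinate-free constant $\mathrm{vol}^2(M)$ gives $\tilde h_{11}(\tilde a)|d\tilde a|^2 = h_{11}(a)|da|^2$.

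The main computational step is (\ref{h2h1}), which I would get by applying $\partial^2/\partial z\partial a$ to the master identity. The antiholomorphic term $\overline{K}$ is annihilated by this operator, and the surviving half satisfies
\[
\Big[\frac{\partial^2 K}{\partial z\partial a}\Big]_{z=a} \;=\; \tfrac16\{\tilde a, a\}_2,
\]
which is precisely the limit formula for the Schwarzian derivative recorded at the start of Section~\ref{sec:connections}. On the left, the chain rule produces a factor $\varphi'(z)\varphi'(a)$, which at $z=a$ collapses to $\varphi'(a)^2$. Combining this with the Lemma~\ref{lem:taylor} identity $\partial h_1/\partial a - 2 h_2 = 2\,[\partial^2 H/\partial z\partial a]_{z=a}$ (and its tilde counterpart) and multiplying through by $(da)^2 = (d\tilde a)^2/\varphi'(a)^2$ delivers (\ref{h2h1}). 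The main bookkeeping difficulty I anticipate is keeping the holomorphic and antiholomorphic halves of $\log|(\tilde z - \tilde a)/(z-a)|^2$ separate during the differentiation; once the branch $K$ is fixed on a diagonal neighbourhood, however, everything reduces to matching Taylor coefficients through order two in $z-a$.
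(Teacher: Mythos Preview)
Your proof is correct and follows essentially the same route as the paper's: both exploit the coordinate-invariance of $G(z,a)$, so that the difference $\tilde H-\nobreak H$ equals $\log|(\tilde z-\tilde a)/(z-a)|$, and then read off the transformation laws from this together with Lemma~\ref{lem:taylor}. The only cosmetic difference is that the paper packages the computation for (\ref{h2h1}) by invoking Lemma~\ref{lem:connections} applied to the double differential $-4\pi\,\partial^2 G/\partial z\partial a\,dz\,da$, whereas you differentiate the master identity directly and use the limit formula $\{\tilde a,a\}_2 = 6\lim_{z\to a}\partial_z\partial_a\log\frac{\tilde z-\tilde a}{z-a}$; the content is identical.
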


\begin{remark}\label{rem:h}
We see that both $h_0$ and $h_{11}$ define Riemannian metrics on $M$. For $h_{11}$ this is explicit in (\ref{h11}),
and for $h_0$ the metric is
\begin{equation}\label{h0metric}
ds^2= e^{-2h_0(a)}|da|^2.
\end{equation}
Compare the last equation in Lemma~\ref{lem:connections}. However, the two metrics are in general not the same: as is seen from
(\ref{c11lambda}) the metric $ds^2=h_{11}(a)|da|^2$ equals, up to a constant factor, the initially given metric on $M$,
while (\ref{h0metric}) usually is a different metric, somewhat related to (but not identical with)
what in \cite{Grotta-Ragazzo-Barros-Viglioni-2017} is called the ``steady vortex metric''. 

However it should be noted that the metric (\ref{h0metric}) actually depends on the initially given metric. But if the two metrics agree up to a constant factor,
as in the case of the sphere above (in Example~\ref{ex:sphere}),
then the situation is stable and, as will be a consequence of Theorem~\ref{thm:dzkdt} below, these metrics then will have the property that a single vortex does not
move unless one adds circulations beyond those which are already present in the Green function itself.

For later reference we point out that, by (\ref{rlog}), $-\frac{\partial}{\partial a}\log \lambda(a)da$ transforms in the same way as $h_1(a)da$.
 
\end{remark}

\begin{proof}[Proof {\rm (of Lemma)}]
The proofs of the first two equations are similar to those of the corresponding statements in Lemma~\ref{lem:connections}.
The third statement comes out on combining Lemma~\ref{lem:connections} with (\ref{c2c1c0}) in Lemma~\ref{lem:taylor},
noticing that
$$
-4\pi \frac{\partial^2 G(z,a)}{\partial z \partial a}dzda = \frac{dzda}{(z-a)^2} -2\frac{\partial^2 H(z,a)}{\partial z \partial a}dzda,
$$ 
where the left member behaves as the coefficient of a double differential of type $dzda$.
It follows that $2\{\frac{\partial^2 H(z,a)}{\partial z \partial a}\}_{z=a}$ transforms as $2c_2(a)$ in Lemma~\ref{lem:connections}.

The last statement is immediate from (\ref{c11lambda}).
\end{proof}


\section{A weak formulation of Euler's equation}\label{sec:euler}

We return to fluid mechanics and specialize to incompressible fluid flow in two dimensions. 
Thus $M$ is a Riemannian manifold of dimension $n=2$, and  we take the constitutive equation 
(\ref{constitutive}) to be simply that $\rho=1$.
Then the equation of continuity (\ref{continuityeq}) becomes very simple, namely
\begin{equation}\label{dstarnu}
d*\nu=0.
\end{equation}
Indeed, the fluid velocity vector ${\bf v}$ and the corresponding covector ($1$-form) $\nu$ are related by (\ref{Hodgei}),
and so $d*\nu =d(i_{\bf v}{\rm vol}^2)=\mathcal{L}_{\bf{v}}({\rm vol}^2) =0$  by (\ref{cartan}), (\ref{continuityeq}).

It follows that $*\nu$ locally is of the form 
$$
*\nu=d\psi
$$
for some {\it stream function} $\psi$. However, this $\psi$ will not be globally single-valued in general. For any branch of $\psi$ we have 
$-d*d\psi =\omega$ in terms of the vorticity $2$-form $\omega=d\nu$. Clearly  $\int_M \omega=\int_M d\nu=0$, so (\ref{Hodgeomega}), (\ref{tvol}) give
$$
\omega= -d*d G^\omega, 
$$
where $G^\omega$ has the advantage of being single-valued. It follows that 
\begin{equation}\label{nudG}
\nu = -*dG^\omega + \eta,
\end{equation}
for some $1$-form  $\eta$ satisfying $d\eta =0$. Using  (\ref{dstarnu}) it follows that also $d*\eta=0$, i.e.,
$\eta$ is a harmonic $1$-form. As a such, it is uniquely determined by its periods
$\oint_{\alpha_k} \nu$, $\oint_{\beta_k} \nu$ with respect to a canonical homology bases $\{\alpha_k, \beta_k: k=1,\dots, {\texttt{g}}\}$.

Setting 
$$
a_k=\oint_{\alpha_k}\nu, \quad b_k=\oint_{\beta_k}\nu
$$
for the periods it follows that there is a one-to-one correspondence
$$
\nu \leftrightarrow (\omega, \{a_k, b_k\})
$$
between the velocity $1$-form, on one hand, and the vorticity $2$-form combined with given periods on the other hand.

The  following three steps give an equation for $\frac{\partial\omega}{\partial t}$ in terms of $\omega$:
\begin{itemize}
\item $(\omega,  \{a_k, b_k\})\mapsto \nu$ by Hodge theory, i.e., by solving
$$
\begin{cases}
d\nu =\omega,\\
d*\nu =0,\\
\oint_{\alpha_k} \nu=a_k, \quad \oint_{\beta_k} \nu=b_k.
\end{cases}
$$
\item $\nu \mapsto {\bf v}$ by using the metric:
if $\nu =v_x dx+ v_y dy$, then $*\nu= -v_y dx+v_x dy$ and
$$
{\bf v}= \frac{1}{\lambda^2} (v_x \frac{\partial}{\partial x}+ v_y \frac{\partial}{\partial y}).
$$
\item ${\bf v}\mapsto \frac{\partial \omega}{\partial t}$
by Euler's equations, or conservation of vorticity:
\begin{equation}\label{omegaLv}
\frac{\partial\omega}{\partial t}+\mathcal{L}_{\bf{v}}(\omega)=0.
\end{equation}
\end{itemize}

As for the first item, it may be noticed that having the data $(\omega,  \{a_k, b_k\})$ is equivalent to knowing
the periods $\oint_\gamma \nu$ for all closed curves $\gamma$ in $M$. Thus the first step can be viewed as a map
$$
\{\oint_\gamma \nu: \text{all closed curves }\gamma\}\mapsto \nu.
$$
Expressing (\ref{omegaLv}) directly in terms of $\nu$ (i.e., avoiding the step via ${\bf v}$) it becomes
$$
\frac{\partial{\omega}}{\partial t}  =*\nu\wedge d *{\omega}.
$$
Here the right member is an antisymmetric bilinear form in $\nu$ and $\omega$, a form which is amply discussed from 
Lie algebra points of view in \cite{Arnold-Khesin-1998}.

We can complete $\nu$ into a complex-valued $1$-form by
$$
\nu + \I *\nu=-*d\psi +\I d\psi=\I (dG^\omega+ \I *dG^\omega)+\eta + \I *\eta.
$$
In regions where $\omega=0$ we can also write $\nu =d\varphi$, where $\varphi$ is a locally defined velocity potential. Then
$\nu + \I *\nu=d\varphi +\I d\psi=d\Phi$, with 
\begin{equation}\label{Phi} 
\Phi =\varphi + \I \psi.
\end{equation}
We emphasize that $\Phi$ makes sense only in regions with no vorticity, and that it then is an additively multi-valued analytic function.
In the point vortex case $d\Phi$ is however a well-defined meromorphic differential (Abelian differential). 

It is straight-forward to formulate Euler's equation (\ref{euler}) directly in terms of $\psi$ and $p$.
Using dot for time derivative the result is the somewhat awkward looking equation
\begin{equation}\label{euler0}
-*d\dot{\psi} + \frac{1}{2}d(*(d\psi\wedge *d\psi)) +dp= (*d*d\psi) \wedge d\psi.
\end{equation}
This is an equation for real-valued $1$-forms, which can be expressed in terms of a local basis $dx$, $dy$ and then becomes one equation for each 
component.  Nothing prevents us from
using complex-valued $1$-forms, with natural basis $dz=dx+\I dy$, $d\bar{z}=dx+\I dy$. Then real-valued forms may get complex-valued coefficients.
For example $d\psi =\frac{\partial \psi}{\partial x}dx+\frac{\partial \psi}{\partial y}dy =\frac{\partial \psi}{\partial z}dz+\frac{\partial \psi}{\partial \bar{z}}d\bar{z}$,
with then $\frac{\partial \psi}{\partial z}$ and $\frac{\partial \psi}{\partial \bar{z}}$ complex-valued, but complex conjugates of each other.
The Hodge star acting on the above complex basis just multiplies with $\pm\I$: $*dz=-\I dz$, $*d\bar{z}=\I d\bar{z}$.

If we write (\ref{euler0}) in terms of the basis $\{dz,d\bar{z}\}$ then the coefficient of  $dz$ contains the same information as that of $d{\bar{z}}$, hence it is
enough to use one of them. Selecting to work with the $d\bar{z}$-component of (\ref{euler0}) then gives the equation
\begin{equation}\label{eulercomplex}
\frac{\partial}{\partial \bar z}\Big( \frac{2}{\lambda^2}\frac{\partial \psi}{\partial z}\frac{\partial\psi }{\partial \bar z}
+{p}-\I\dot{\psi} \Big)=\frac{2}{\lambda^2}\frac{\partial}{\partial z}(\frac{\partial \psi}{\partial \bar z})^2
\end{equation}
Here the right member can also be expressed as
$$
\frac{2}{\lambda^2}\frac{\partial}{\partial z}(\frac{\partial \psi}{\partial \bar z})^2
=\frac{4}{\lambda^2}\frac{\partial^2\psi}{\partial z\partial\bar{z}}\frac{\partial \psi}{\partial \bar z}
=-(*\omega)\,\frac{\partial \psi}{\partial \bar z}.
$$

We summarize:

\begin{lemma}\label{lem:f}
In terms of the (in general multi-valued) function
\begin{equation}\label{f}
f(z,t)=\frac{2}{\lambda^2}\frac{\partial \psi}{\partial z}\frac{\partial\psi }{\partial \bar z} +p- \I \frac{\partial \psi}{\partial t},
\end{equation}
the Euler equation (\ref{euler}) reads
\begin{equation}\label{euler2}
\frac{\partial f}{\partial \bar z}=\frac{2}{\lambda^2}\frac{\partial}{\partial z}(\frac{\partial \psi}{\partial \bar z})^2.
\end{equation}
\end{lemma}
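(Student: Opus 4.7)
The plan is to recognize Lemma~\ref{lem:f} as essentially a repackaging of equation~(\ref{eulercomplex}), which the text has already extracted from Euler's equation as its $d\bar z$-component. The content to verify is really just that the bracketed expression in~(\ref{eulercomplex}) is literally the function $f(z,t)$ defined in~(\ref{f}). So the proof splits into two parts: (a) confirm the derivation of~(\ref{eulercomplex}) from the displayed real form~(\ref{euler0}) of Euler's equation, and (b) read off the identification with $f$.

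For part~(a), I would start from~(\ref{euler0}) and rewrite each term in the complex basis $\{dz,d\bar z\}$, using $*dz=-\I dz$, $*d\bar z=\I d\bar z$. Taking $d\psi=\psi_z dz+\psi_{\bar z}d\bar z$ and $*d\psi=-\I\psi_z dz+\I\psi_{\bar z}d\bar z$, one computes
$$
d\psi\wedge *d\psi=2\I\,\psi_z\psi_{\bar z}\,dz\wedge d\bar z,
$$
and then with $*(dz\wedge d\bar z)=-2\I/\lambda^2$ obtains $*(d\psi\wedge *d\psi)=(4/\lambda^2)\psi_z\psi_{\bar z}=|{\bf v}|^2$. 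The $d\bar z$-coefficient of $\tfrac12 d(*(d\psi\wedge *d\psi))$ is therefore $\partial_{\bar z}\bigl((2/\lambda^2)\psi_z\psi_{\bar z}\bigr)$; the $d\bar z$-coefficient of $-*d\dot\psi$ is $-\I\dot\psi_{\bar z}=\partial_{\bar z}(-\I\dot\psi)$; and that of $dp$ is $\partial_{\bar z}p$. Summing these gives exactly the left-hand side of~(\ref{eulercomplex}). For the right-hand side I would compute $\omega=d\nu=-2\I\,\psi_{z\bar z}\,dz\wedge d\bar z$, so $*\omega=-(4/\lambda^2)\psi_{z\bar z}$, and the $d\bar z$-component of $(*d*d\psi)\wedge d\psi=-(*\omega)\,d\psi$ is $(4/\lambda^2)\psi_{z\bar z}\psi_{\bar z}=(2/\lambda^2)\partial_z(\psi_{\bar z})^2$, as claimed.

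Part~(b) is then a direct comparison: the parenthesized quantity on the left of~(\ref{eulercomplex}) is
$$
\frac{2}{\lambda^2}\frac{\partial\psi}{\partial z}\frac{\partial\psi}{\partial\bar z}+p-\I\frac{\partial\psi}{\partial t},
$$
which is exactly $f(z,t)$ in~(\ref{f}). Rewriting the right-hand side in the compact form $(2/\lambda^2)\partial_z(\partial\psi/\partial\bar z)^2$ completes the identification with~(\ref{euler2}).

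The only real obstacle is bookkeeping with the signs and factors of $\I$ arising from the Hodge star in complex coordinates, particularly in the term $(*d*d\psi)\wedge d\psi$, where one must keep careful track of the sign $**|_{\Omega^1}=-1$ and of the convention $\nu=-*d\psi$. Once those are handled consistently the computation is routine, and no new analytic or geometric input beyond what is already established in Section~\ref{sec:euler} is required; the lemma's role is essentially to introduce the convenient name $f$ for the relevant expression, so the proof is mostly a verification that the definition is well chosen.
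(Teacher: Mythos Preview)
Your proposal is correct and follows exactly the approach of the paper: the lemma is just a summary of the preceding derivation, and you have spelled out the passage from (\ref{euler0}) to (\ref{eulercomplex}) in the complex basis $\{dz,d\bar z\}$ in more detail than the paper itself does. The identification in part~(b) is then immediate, as you say.
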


We see from (\ref{euler2}) that $f$ is analytic in regions without vorticity, and in such regions it simply equals $\dot{\Phi}$
(up to sign):
$$
f=-\frac{\partial \Phi}{\partial t} \quad \text{in } \{\omega=0\}.
$$
We also see that 
\begin{equation}\label{bernuolli}
-\frac{\partial \varphi}{\partial t}=\frac{2}{\lambda^2}\frac{\partial \psi}{\partial z}\frac{\partial\psi }{\partial \bar z}+{p }=\frac{1}{2}|{\bf v}|^2+p
\end{equation}
in such regions. In other words, the time derivative of the velocity potential is a single-valued function having a direct hydrodynamical 
interpretation.   The fact that $\dot{\varphi}$ is globally single-valued can be seen as
a way to express Kelvin's law (\ref{ddtgamma})  for conservation of circulations. In contrast,
the imaginary part of $f$  need not be globally single-valued (but the derivative $\partial f/\partial \bar{z}$ always is).

As a further remark, Bernoulli's equation, stating that the right member of (\ref{bernuolli}) is constant in the case of stationary flow,
comes out very nicely: when the flow is stationary, so that
$\dot{\psi}=0$, then $f$ is analytic and real-valued in regions without vorticity, hence equals a (real) constant there. 

In the point vortex case one may view the usage of $f$ as one way (out of many) to get rid of the pressure $p$. Indeed, the pressure
shows up only in the real part of $f$, and one can see its role there as just a way to make this real part equal to the harmonic conjugate
of (minus) the imaginary part, namely $\dot{\psi}$.

In the smooth case (\ref{euler2}) holds if and only if the integral over arbitrary subdomains $D\subset M$ of the two members agree:
\begin{equation}\label{euler5}
\int_D d( fdz)  = \int_D \frac{2}{\lambda^2}\frac{\partial}{\partial z}(\frac{\partial \psi}{\partial \bar z})^2d\bar{z}dz.
\end{equation}
Strictly speaking, $D$ should be sufficiently small, since $f$ may be multivalued, and have smooth boundary to allow for partial integration in a next step.
In case $\lambda$ is constant (\ref{euler5}) leads immediately to
\begin{equation*}\label{euler1}
\int_{\partial D}  fdz  =- \int_{\partial D} \frac{2}{\lambda^2}\overline{(\frac{\partial \psi}{\partial  z})^2d{z}},
\end{equation*}
and in the point vortex limit both members can be evaluated by residues to give the well-known equations of motion for the vortices.

When $\lambda$ is not constant the same idea still works, but the details become slightly more involved. We assume that $\lambda$ is sufficiently smooth,
say real analytic for simplicity. Selecting a point $a\in D$, which when passing to the point vortex limit (in the next section) will be a vortex point, we expand
$$
\frac{2}{\lambda(z)^2}=\frac{2}{\lambda(a)^2}-\frac{4}{\lambda(a)^3}\frac{\partial\lambda(a)}{\partial a}\cdot(z-a)
-\frac{4}{\lambda(a)^3}\frac{\partial\lambda(a)}{\partial \bar{a}}\cdot(\bar{z}-\bar{a})+\mathcal{O}(|z-a|^2)
$$
$$
=\frac{2}{\lambda(a)^2}\Big(1-{2}\frac{\partial\log\lambda(a)}{\partial \bar{a}}\cdot(\bar{z}-\bar{a})\Big)+Q(z),
$$
where $Q(z)$ simply stands for the remainder, namely
$$
Q(z)=-\frac{4}{\lambda(a)^3}\frac{\partial\lambda(a)}{\partial a}\cdot (z-a)+\mathcal{O}(|z-a|^2).
$$

In (\ref{euler5}) one sees that all terms except the one with $Q(z)$ can be pushed to the boundary, i.e. (\ref{euler5}) can be written
$$
\int_{\partial D}  fdz  +  \frac{2}{\lambda^2(a)}\int_{\partial D}\overline{\big(1-{2}\frac{\partial\log\lambda(a)}{\partial {a}}\,({z}-{a})\big)(\frac{\partial \psi}{\partial  z})^2d{z}}=
$$
\begin{equation}\label{decomposition}
=\int_D Q(z)\frac{\partial}{\partial z}(\frac{\partial \psi}{\partial \bar z})^2d\bar{z}dz.
\end{equation}
When we pass to the limit that all vorticity inside $D$ is concentrated to the point $z=a$ the left member in (\ref{decomposition})
is easily evaluated by residues, only the right member has to be handled with some care.
This right member vanishes outside any (isolated) point vortex $z=a$, but there could be some kind of contribution exactly at $z=a$. One way to express the problem is 
to say that $({\partial \psi}/{\partial \bar z})^2$ is not locally absolutely integrable, hence does not make immediate sense as a distribution, and so it becomes 
difficult to make sense to the derivative $\partial/\partial z$ in front of it. 

The idea with the decomposition (\ref{decomposition}) is then that all contributions at $z=a$ have been sorted out to lie in the left member (the boundary integrals), and the right member
will not contribute at vortex points. And this is due to cancellations in the integrand (of the right member) which can be seen on writing it roughly as (one may think of smearing out the point
mass of $\Delta\psi$ a little)
$$
Q(z)\frac{\partial}{\partial z}(\frac{\partial \psi}{\partial \bar z})^2=2Q(z)\frac{\partial^2 \psi}{\partial z\partial\bar{z}}\frac{\partial \psi}{\partial \bar z}\sim
$$
$$
\sim(z-a)\cdot\Delta \psi\cdot \frac{1}{\bar{z}-\bar{a}}\sim ({\rm finite\ point \ mass})\cdot\frac{(z-a)^2}{|z-a|^2}.
$$
Due to cancellations when integrating the last factor one can safely write the right member of (\ref{decomposition}) as
$$
\lim_{\varepsilon\to 0}\int_{D\setminus\D(a,\varepsilon)} Q(z)\frac{\partial}{\partial z}(\frac{\partial \psi}{\partial \bar z})^2d\bar{z}dz,
$$
which then is stable under the limit process and makes good sense (and of course reduces to the full integral over $D$ in the smooth case).

The above will be good enough for our purposes, and it will lead to right point vortex dynamics, as given in \cite{Dritschel-Boatto-2015} for example, and
also to be confirmed by a Hamiltonian approach in Section~\ref{sec:hamiltoneq}. We summarize the weak formulation to be used as follows.

\begin{lemma}\label{lem:eulerweak}
When all data are smooth, the Euler equation (\ref{euler}) with $\rho=1$ is equivalent to the identity
\begin{equation}\label{eulerweak1}
\int_{\partial D}  fdz  +  \frac{2}{\lambda^2(a)}\int_{\partial D}\overline{\big(1-{2}\frac{\partial\log\lambda(a)}{\partial {a}}\,({z}-{a})\big)(\frac{\partial \psi}{\partial  z})^2\,d{z}}=
\end{equation}
$$
=\lim_{\varepsilon\to 0}\int_{D\setminus\D(a,\varepsilon)} Q(z)\frac{\partial}{\partial z}(\frac{\partial \psi}{\partial \bar z})^2d\bar{z}dz
$$
holding for all constellations  $a\in D\subset M$ with $D$ sufficiently small.
\end{lemma}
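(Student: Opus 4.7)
The plan is to derive (\ref{eulerweak1}) from Lemma~\ref{lem:f} by integrating the complex Euler equation (\ref{euler2}) over $D$ and applying Stokes' theorem to both sides. I first note that the left member of (\ref{euler2}) times $d\bar z\wedge dz$ is exactly $d(f\,dz)$, whose integral over $D$ equals $\int_{\partial D}f\,dz$; this is precisely the identity (\ref{euler5}) already displayed in the text. The task is then to rewrite the right member of (\ref{euler5}) as the boundary integral appearing on the left of (\ref{eulerweak1}) plus the bulk integral on the right.

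For the bulk side, I would substitute the Taylor expansion
$$
\frac{2}{\lambda(z)^2}=\frac{2}{\lambda(a)^2}\Big(1-2\frac{\partial\log\lambda(a)}{\partial\bar a}(\bar z-\bar a)\Big)+Q(z)
$$
provided in the excerpt. The decisive feature of the leading part is that its $z$-dependence is only through $\bar z$, so it commutes with $\partial/\partial z$ and can be absorbed inside:
$$
\frac{2}{\lambda(a)^2}\big(1-2\tfrac{\partial\log\lambda(a)}{\partial\bar a}(\bar z-\bar a)\big)\,\frac{\partial}{\partial z}\big(\tfrac{\partial\psi}{\partial\bar z}\big)^2
=\frac{\partial}{\partial z}\Big[\tfrac{2}{\lambda(a)^2}\big(1-2\tfrac{\partial\log\lambda(a)}{\partial\bar a}(\bar z-\bar a)\big)\big(\tfrac{\partial\psi}{\partial\bar z}\big)^2\Big].
$$
The corresponding contribution is then an exact $2$-form $-d(\cdots\,d\bar z)$ and converts by Stokes to a boundary integral over $\partial D$ with $d\bar z$. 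The remaining piece, carrying the factor $Q(z)$, cannot be so absorbed since $Q$ genuinely depends on $z$, and it stays as a bulk term.

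To put the boundary integral in the form displayed in (\ref{eulerweak1}) I would use that $\psi$ is real, so $\overline{\partial\psi/\partial z}=\partial\psi/\partial\bar z$, and that $\lambda$ is real, so $\overline{\partial_a\log\lambda(a)}=\partial_{\bar a}\log\lambda(a)$. Hence the boundary integrand is pointwise the complex conjugate of $\frac{2}{\lambda^2(a)}(1-2\partial_a\log\lambda(a)(z-a))(\partial\psi/\partial z)^2\,dz$, and transporting the resulting term to the left-hand side gives exactly (\ref{eulerweak1}). In the smooth case the $\varepsilon$-regularization on the right is vacuous, the integrand being bounded; the form with the limit is however precisely what will survive the point-vortex limit in the next section. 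For the converse direction I would localize (\ref{eulerweak1}) by letting $D$ shrink around an arbitrary $a$ and reverse the manipulations above to recover (\ref{euler5}) for all small $D\ni a$, which in the smooth setting is equivalent to the pointwise identity of Lemma~\ref{lem:f} and hence to (\ref{euler}).

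The main obstacle I anticipate is sign and orientation bookkeeping: relating $d\bar z\wedge dz$ to $dz\wedge d\bar z$ in Stokes' theorem, tracking the sign produced when the leading part of the expansion of $2/\lambda^2$ is pulled inside $\partial/\partial z$, and checking that the complex conjugation of the boundary integrand delivers exactly the $+\frac{2}{\lambda^2(a)}$ prefactor of (\ref{eulerweak1}) rather than the opposite sign. These are not conceptual difficulties, but they must be tracked carefully; once the signs are in place, the remainder of the argument is a direct application of Stokes and the Taylor expansion of $1/\lambda^2$.
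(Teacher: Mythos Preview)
Your proposal is correct and follows essentially the same route as the paper: the lemma is presented there as a summary of the preceding discussion, which consists exactly of integrating (\ref{euler2}) over $D$ to obtain (\ref{euler5}), inserting the Taylor expansion of $2/\lambda(z)^2$ about $a$, pushing the $\bar z$-only leading part through $\partial/\partial z$ and out to $\partial D$ via Stokes, and then conjugating using the reality of $\psi$ and $\lambda$. Your remarks about the $\varepsilon$-limit being vacuous in the smooth case and about recovering the pointwise equation by shrinking $D$ are likewise what the paper has in mind.
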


\begin{remark} 
There are many weak formulations of Euler's equation, giving different results,
or no result at all,  when passing to point vortex limit. So this passage is indeed nontrivial.
For example, it turns out that (\ref{eulerweak1}) is strong enough to give the speed of individual point vortices, but too weak to determine the motion 
of vortex dipoles. 
\end{remark}


\section{Motion of point vortices in terms of affine connections}\label{sec:point vortices}

In this section we shall combine the analysis of affine connections in Section~\ref{sec:connections} and the weak formulation of Euler's equation
in Section~\ref{sec:euler} to determine the equations of motion  for point vortices on a closed Riemannian manifold. The result, Theorem~\ref{thm:dzkdt},
is stated in a rather vague form, involving quantities which are not fully explicit, but this will be remedied in Sections~\ref{sec:hamiltonian} and
\ref{sec:hamiltoneq}, where the results are as explicit as they can possibly be in the general framework we are working with.
We start by summarizing the given data and assumptions in the point vortex case.

\begin{itemize}

\item $M$ is a compact Riemann surface of genus ${\texttt{g}}\geq 0$.

\item The vorticity is concentrated to finitely many points $z_1,\dots,z_n\in M$, so that: 
\begin{equation}\label{Gamma}
\omega=\sum_{j=1}^n \Gamma_j \delta_{z_j}, \quad \sum_{j=1}^n\Gamma_j =0,
\end{equation}
where $\Gamma_j\in\R$ is the strength of the vortex at $z_j$.

\item We have prescribed periods $a_k$, $b_k$ for the velocity $1$-form $\nu$ around a canonical homology basis, $\alpha_k$, $\beta_k$, which we take to be fixed
curves avoiding the vortex points for the time interval under consideration (taken to be short enough).

\end{itemize}

Under these assumptions
\begin{equation}\label{nuphipsi}
\nu +\I *\nu=d\varphi +\I d\psi =d\Phi.
\end{equation}
is a meromorphic differential on $M$ determined by the data
$$
\begin{cases}
\nu +\I *\nu = \sum \frac{\Gamma_j}{2\pi i} \frac{dz}{z-z_j} + \text{ regular analytic} \\
\oint_{\alpha_k} {\nu} = a_k, \\
\oint_{\beta_k} {\nu} = b_k.
\end{cases}
$$
In other words, $\nu +\I *\nu =d\Phi$ is an Abelian differential of the third kind on $M$, 
and it starts  evolving in time once $M$ has been provided with a Riemannian metric. 

Similarly, $\Phi$  is an Abelian integral (additively multi-valued in general) on $M$. Near a point vortex $z_k=z_k(t)$ of strength $\Gamma_k$ we expand, like in Lemma~\ref{lem:connections},
\begin{equation}\label{Phic}
\Phi=\varphi+\I\psi=\frac{\Gamma_k}{2\pi \I} \left(\log (z-z_k)- c_0-c_1(z-z_k)-\dots\right),
\end{equation}
where the coefficients $c_j=c_j(z_k)$ actually depend on all data, and also on the choice of the local coordinate at the vortex point. 
By Lemma~\ref{lem:connections} the first few coefficients $c_j$ behave as connections
under changes of coordinates for $z_k$.

The Green function
\begin{equation}\label{GGammaG}
G^\omega(z) = \sum_{k=1}^n \Gamma_k G(z,z_k)
\end{equation}
takes care of the singularities, but not of the periods, when considered as part of the stream function $\psi$. 
One need an extra term 
to absorb the difference between $\psi$ and $G^\omega$, and this is the harmonic form $\eta$ in (\ref{nudG}).
Since this $\eta$ is locally exact it can be written $\eta=dU$ with $U$ a harmonic function (multivalued if considered on all $M$). Then
\begin{equation}\label{psiGu}
\psi=G^\omega + U^*,
\end{equation}
modulo additive local constants, and with $U^*$ denoting the harmonic conjugate of $U$ (this notation is compatible with the Hodge star:
$d(U^*)=*dU$). Since $G^\omega$ is single-valued one can think of $U^*$ as representing the multi-valuedness of $\psi$.

The Green potential $G^\omega$ is harmonic away from the vortex points, while each individual $G(\cdot,z_k)=G^{\delta_{z_k}}$ appearing in (\ref{GGammaG}) is not, 
because its Laplacian contains compensating volume terms. However, the $G(\cdot,z_k)$ are good enough to single out individual Robin functions at the vortices. 
Recalling (\ref{GlogH}), (\ref{greentaylor}) we have
$$
G(z,z_k)= \frac{1}{2\pi}\re\Big(- \log (z-z_k) + h_0(z_k) +h_1(z_k)(z-z_k)+
$$
$$
+h_{2}(z_k)(z-z_k)^2+h_{11}(z_k)(z-z_k)(\bar{z}-\bar{z_k})+ \mathcal{O}(|z-z_k|^3\Big).
$$
For the full stream function we have similarly, by (\ref{Phic}), (\ref{GGammaG}), (\ref{psiGu}),
$$
\psi(z)= \frac{\Gamma_k}{2\pi}\re\big(- \log (z-z_k) + c_0(z_k) +c_1(z_k)(z-z_k)+\dots\big)=
$$
$$
=\Gamma_k G(z,z_k)+\sum_{j\ne k}\Gamma_j G(z,z_j)+U^*(z).
$$
The constant term $c_0(z_k)$ in $\psi$ can therefore be identified as
\begin{equation}\label{cGammah}
c_0(z_k)=h_0(z_k) +\sum_{j\ne k} \frac{2\pi\Gamma_j}{\Gamma_k} \,G(z_k,z_j)+\frac{2\pi}{\Gamma_k}U^*(z_k).
\end{equation}
Similarly for $c_1(z_k)$:
\begin{equation}\label{cGammah1}
c_1(z_k)= h_1(z_k) +\sum_{j\ne k} \frac{4\pi\Gamma_j}{\Gamma_k}\, \frac{\partial G(z_k,z_j)}{\partial z_k}+\frac{4\pi}{\Gamma_k}\frac{\partial U^*(z_k)}{\partial z_k}.
\end{equation}

On using (\ref{Phic}) the analytic completion (\ref{nuphipsi}) of $\nu$ becomes 
\begin{equation*}\label{Phinu}
\nu +\I *\nu =  d\Phi=\frac{\Gamma_k}{2\pi i} \left(\frac{1}{z-z_k} - c_1(z_k) -2 c_2(z_k)(z-z_k)+\dots\right)dz.
\end{equation*}
This gives the expansions
\begin{equation*}\label{fdotPhi}
f(z,t)=-\dot{\Phi}(z,t)=-\frac{\Gamma_k}{2\pi\I}  (\frac{\dot{z}_k}{z-z_k} +\mathcal {O}(1)),
\end{equation*}
\begin{equation*}\label{dpsidz}
2\frac{\partial \psi}{\partial z}=-\I \Phi'=\frac{\Gamma_k}{2\pi}  (-\frac{1}{z-z_k}+c_1(z_k)+\mathcal{O}(z-z_k)),
\end{equation*}
\begin{equation*}\label{d2psidz}
4(\frac{\partial \psi}{\partial z})^2=-(\Phi')^2=\frac{\Gamma_k^2}{4\pi^2}  \left(\frac{1}{(z-z_k)^2}-\frac{2c_1(z_k)}{z-z_k}+\mathcal {O}(1)\right).
\end{equation*}
Now it is only to insert all this into (\ref{eulerweak1}), with $a=z_k$ and $D$ chosen such that $z_k$ is the only vortex inside $D$.
The left member is evaluated by straight-forward residue computations, and the integrand in right member vanishes because $\partial\psi/\partial z$
is analytic in $D\setminus \D(z_k,\varepsilon)$. Out of this comes the vortex dynamics right away:

\begin{theorem}\label{thm:dzkdt}
The velocity of the vortex $z_k(t)$ is given by
\begin{align}\label{dzkdt}
\lambda(z_k)^2\frac{d {z}_k}{d t} &= \frac{\Gamma_k}{2\pi\I }\left(\overline{c_1(z_k)}+\frac{\partial}{\partial \bar{z}_k}\log \lambda(z_k)\right)\\
&=\frac{\Gamma_k}{4\pi i}\left(\overline{r_{\rm metric}(z_k)}-\overline{r_{\rm robin}(z_k)}\right),
\end{align}
where $c_1(z_k)$ is given by (\ref{cGammah1}) and where we for the second equation have set
$$
\begin{cases}
r_{\rm metric}(z)&=2\frac{\partial}{\partial z}\log \lambda(z),\\
r_{\rm robin}(z)&=-2c_1(z).
\end{cases}
$$
\end{theorem}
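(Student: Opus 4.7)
The plan is to specialize the weak form of Euler's equation in Lemma~\ref{lem:eulerweak} to a small disk $D$ centered at $z_k$ that excludes all other vortices, take $a=z_k$, and read off $\dot z_k$ from the resulting residue identity. The Laurent expansions of $\Phi$, $\partial\psi/\partial z$ and $f=-\dot\Phi$ near $z_k$ recorded just before the theorem statement supply everything needed.

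First I would show that the right-hand side of (\ref{eulerweak1}) vanishes in the point-vortex limit. Since $\sum_j\Gamma_j=0$, the compensating countermass in (\ref{Hodgeomega}) is absent and $G^\omega$ is harmonic on $D\setminus\{z_k\}$. The extra term $U^*$ in the decomposition (\ref{psiGu}) is also harmonic there, so $\psi$ itself is harmonic on $D\setminus\{z_k\}$. Consequently $\partial_z\partial_{\bar z}\psi\equiv 0$, and $\partial_z\bigl((\partial_{\bar z}\psi)^2\bigr)=2(\partial_{\bar z}\psi)(\partial_z\partial_{\bar z}\psi)$ vanishes identically on $D\setminus\D(z_k,\varepsilon)$. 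The limit on the right in (\ref{eulerweak1}) is therefore zero.

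Next I would evaluate the two boundary integrals on the left by residues. The expansion of $f$ yields immediately $\oint_{\partial D} f\,dz=\Gamma_k\dot z_k$. For the second integral, multiplying the Laurent expansion of $(\partial_z\psi)^2$ by the linear factor $1-2\partial_z\log\lambda(z_k)\,(z-z_k)$ produces a combined residue: the double pole $1/(z-z_k)^2$ interacts with the linear factor to generate an extra term $-2\partial_z\log\lambda(z_k)/(z-z_k)$, which merges with the preexisting residue contribution $-2c_1(z_k)/(z-z_k)$. Multiplying by $2\pi\I$, complex-conjugating as dictated by the bar in (\ref{eulerweak1}), and multiplying by $2/\lambda^2(z_k)$ gives
$$
\frac{\Gamma_k^2\,\I}{2\pi\,\lambda^2(z_k)}\bigl(\overline{c_1(z_k)}+\partial_{\bar z_k}\log\lambda(z_k)\bigr).
$$
Setting the sum of the two boundary contributions equal to zero and dividing by $\Gamma_k$ gives the first line of (\ref{dzkdt}) (noting that $-\I=1/\I$). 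The second equality is then immediate from the definitions $r_{\rm metric}=2\partial_z\log\lambda$ and $r_{\rm robin}=-2c_1$.

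The main obstacle is bookkeeping: carefully tracking factors of $\I$ and $2\pi$, and the interaction of complex conjugation with residues computed in the holomorphic variable $z$. A secondary subtlety is justifying that the right member of (\ref{eulerweak1}) really vanishes in the limit---since $(\partial_{\bar z}\psi)^2$ is not locally integrable at $z_k$, one must appeal both to the excision of $\D(z_k,\varepsilon)$ built into the weak formulation and to the harmonicity of $\psi$ off the vortex. Once these are settled, the statement is essentially a one-line residue computation.
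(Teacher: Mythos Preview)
Your proposal is correct and follows precisely the route the paper takes: specialize Lemma~\ref{lem:eulerweak} to a small disk about $z_k$ containing no other vortex, observe that the right-hand side vanishes because $\psi$ is harmonic on $D\setminus\{z_k\}$, and evaluate the two boundary integrals by residues using the Laurent expansions recorded just before the theorem. Your residue bookkeeping (including the interaction of the double pole with the linear factor $1-2\partial_z\log\lambda(z_k)(z-z_k)$, and the final $-\I=1/\I$ conversion) is exactly what the paper summarizes in one sentence as ``straight-forward residue computations.''
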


It remains to clarify in more detail how $r_{\rm robin}$ depends on the data $z_j$, $\Gamma_j$, $a_j$, $b_j$,
namely on how the function $U$ in (\ref{cGammah1}) depends on these data.
This missing information will be provided in the next section, more precisely by equations (\ref{U}), (\ref{Ak}), (\ref{Bk}) there. 

We see from the theorem that the motion of a vortex is due to two different sources: the presence of other vortices and circulating flows, summarized in $r_{\rm robin}$,
and the variation of the metric, summarized in $r_{\rm metric}$. Each of these is an {affine connection}
in the sense Section~\ref{sec:connections}. The difference between two affine connections is a differential, or covariant vector.
This makes the equation consistent from a tensor analysis point of view since $dz_k/dt$ is to be considered as a contravariant vector and the factor 
$\lambda^2=g_{11}=g_{22}$ (with $g_{12}=g_{21}=0$) transforms it into a covariant vector.


\section{Energy renormalization and the Hamiltonian}\label{sec:hamiltonian}

The {\it total (kinetic) energy} ${\mathcal{E}}(\nu,\nu)$ of the flow  $\nu=d\varphi=-*d\psi$ is 
in the smooth case given by
$$
2{\mathcal{E}}(\nu,\nu)=\int_M \nu\wedge *\nu=\frac{\I}{2}\int_M (\nu+\I*\nu)\wedge (\nu-\I*\nu)
$$
$$
=\frac{\I}{2}\int_M d\Phi\wedge d\bar{\Phi}=\int_M d\psi\wedge *d\psi=\int_M d\varphi\wedge *d\varphi
$$
$$
=\int_M (*dG^\omega +\eta)\wedge (-dG^\omega +*\eta)
$$
$$
=\int_M dG^\omega \wedge *dG^\omega + \int_M \eta\wedge*\eta
=\mathcal{E}(\omega,\omega)+ \int_M \eta\wedge*\eta
$$
$$
=\int_M G^\omega\wedge \omega +\sum_{j=1}^{\texttt g}(\oint_{\alpha_j}\eta\oint_{\beta_j}*\eta-\oint_{\alpha_j}*\eta\oint_{\beta_j}\eta).
$$
The last term (the sum) is obtained in a standard fashion by cutting the surface along the curves $\alpha_j$, $\beta_j$ to obtain a plane
polygon with $4\texttt{g}$ sides which are pairwise identified with each other. For closed differentials $\theta$, $\tilde{\theta}$ in general this gives
\begin{equation}\label{bilinear}
\int_M \theta \wedge \tilde{\theta}=\sum_{j=1}^\texttt{g} \big( \oint_{\alpha_j}\theta \oint_{\beta_j}\tilde{\theta}-  \oint_{\alpha_j}\tilde{\theta} \oint_{\beta_j}{\theta} \big),
\end{equation}
see for example \cite{Farkas-Kra-1992}.

In the point vortex limit the above expression for the energy becomes, formally,
$$
2{\mathcal{E}}(\nu,\nu)=\sum_{k,j=1}^n \Gamma_k \Gamma_jG(z_k,z_j) +\sum_{j=1}^{\texttt g}\big(\oint_{\alpha_j}\eta\oint_{\beta_j}*\eta-\oint_{\alpha_j}*\eta\oint_{\beta_j}\eta\big),
$$
but the presence of the terms $\Gamma_k^2G(z_k,z_k)$ makes the first term become infinite.
Looking at the expansion (\ref{GlogH}), (\ref{greentaylor}) of the Green's function 
suggests that the infinities could be subtracted away simply by omitting the logarithmic terms, i.e. by replacing $G(z_k,z_k)$ by (minus) $\frac{1}{2\pi}h_0(z_k)$.
However this turns out to be too crude, for example it makes the energy become a $0$-connection instead of a function (as depending on $z_1,\dots,z_n$).
This particular problem can be resolved by adding $\log \lambda (z_k)$ to $h_0(z_k)$, i.e., by subtracting the $0$-connection which is 
associated to the metric. 

It turns out that the above actually is the right thing to do, it gives the same equations of motion as more direct approaches give (e.g., that summarized in 
Theorem~\ref{thm:dzkdt}), or those available in the literature, like \cite{Hally-1980, Dritschel-Boatto-2015}.
Thus we introduce the {\it renormalized energy}, or {\it Hamiltonian function} $H(z_1,\dots,z_n)$, by
$$
2H(z_1,\dots,z_n)=2{\mathcal{E}}_{\rm renorm}(\nu,\nu)=\frac{1}{2\pi}\sum_{k=1}^n \Gamma_k^2 (h_0(z_k)+\log \lambda(z_k))+
$$
$$
+\sum_{k\ne j} \Gamma_k \Gamma_jG(z_k,z_j) +\sum_{j=1}^{\texttt g}\big(\oint_{\alpha_j}\eta\oint_{\beta_j}*\eta-\oint_{\alpha_j}*\eta\oint_{\beta_j}\eta\big).
$$
Here also the final terms depend, although somewhat more indirectly, on $z_1, \dots, z_n$, and next we proceed to elaborate this dependence. 

Since $\sum_{j=1}^\texttt{g}\Gamma_j=0$, the formal $0$-chain  $ \sum_{j=1}^\texttt{g} \Gamma_j\, ({z_j})$
is a boundary, i.e., there exists a $1$-chain $\gamma$ with real coefficients such that, in the sense of homology,
\begin{equation}\label{sigmaGamma}
\partial\gamma= \sum_{k=1}^\texttt{g} \Gamma_k \,({z_k}).
\end{equation}
This is an equality between $0$-chains, and the sum is just formal.
Explicitly, $\gamma$ is a linear combination, with real coefficients, of oriented arcs such that (\ref{sigmaGamma}) holds.
One may for example take $\gamma= \sum_{j=1}^n \Gamma_j \,\gamma_j$, where each $\gamma_j$ is an arc from some base point $b$ to $z_j$.
These (time-dependent) arcs can be chosen so that they, under a small time interval under consideration, do not intersect
the homology basis $\{\alpha_k, \beta_k\}$. 
 
Using $\gamma$ the periods of $*dG^\omega$ can be exhibited more clearly as 
\begin{equation}\label{periodsstardGa}
\oint_{\alpha_k}*dG^\omega =\int_\gamma dU_{\alpha_k},
\end{equation}
\begin{equation}\label{periodsstardGb}
\oint_{\beta_k}*dG^\omega =\int_\gamma dU_{\beta_k}.
\end{equation}
The relations (\ref{periodsstardGa}), (\ref{periodsstardGa}) generalize the corresponding vortex pair equations (\ref{ointdVa}), (\ref{ointdVb}),
in fact they can be obtained by taking linear combinations of relations of the latter form. 

The differentials $dU_{\alpha_k}$, $dU_{\beta_k}$  are fixed (independent of $z_1,\dots, z_n$), so 
all dependence on vortex locations now lies in the $1$-chain $\gamma$:
$$
\gamma =\gamma(z_1, \dots,z_n).
$$
Further on, we may use (\ref{sigmaGamma}) to express (\ref{periodsstardGa}), (\ref{periodsstardGb}) 
in terms of the integrals $dU_{\alpha_k}$, $dU_{\beta_k}$ as in (\ref{Phioint}), (\ref{Uab}):
\begin{equation}\label{periodsstardGc}
\oint_{\alpha_k}*dG^\omega = \int_\gamma dU_{\alpha_k}= \sum_{j=1}^n\Gamma_j U_{\alpha_k}(z_j,b),
\end{equation}
\begin{equation}\label{periodsstardGd}
\oint_{\beta_k}*dG^\omega = \int_\gamma dU_{\beta_k}= \sum_{j=1}^n\Gamma_j U_{\beta_k}(z_j,b).
\end{equation}
Here $b$ may be taken to be the base point mentioned above (after (\ref{sigmaGamma})), but actually the right members 
in (\ref{periodsstardGc}), (\ref{periodsstardGd}) do not depend on $b$.
The functions $U_{\alpha_k}$, $U_{\beta_k}$ have unlimited harmonic continuations all over $M$, then becoming additively multivalued,
but the assumption $\sum \Gamma_j=0$ guarantees that  
whatever branch one chooses on ${\rm supp\,}\gamma$, the final result in the right members above will always be the same.

Now $\eta$ being a harmonic $1$-form it can be expanded as
\begin{equation}\label{etaAB}
\eta=-\sum_{j=1}^\texttt{g} A_j dU_{\beta_j}+\sum_{j=1}^\texttt{g} B_j dU_{\alpha_j},
\end{equation}
where $A_j=A_j(z_1,\dots, z_n)$, $B_j=B_j(z_1,\dots,z_n)$ are coefficients which depend on the locations of the poles. This can also be written (locally) as
$$
\eta=d\big( \sum_{j=1}^\texttt{g} (-A_j U_{\beta_j}+ B_j U_{\alpha_j})\big)
=-*d\big( \sum_{j=1}^\texttt{g} (-A_j U_{\beta_j}^*+ B_j U_{\alpha_j}^*)\big)=-*d(U^*),
$$
by which we have identified the function $U$ appearing in (\ref{psiGu}) as
\begin{equation}\label{U}
U= \sum_{j=1}^\texttt{g} (-A_j U_{\beta_j}+ B_j U_{\alpha_j}).
\end{equation}

The coefficients $A_k$, $B_k$ also depend on the prescribed circulations $a_k$, $b_k$ of the flow,
so that, using (\ref{taualpha}), (\ref{taubeta}), (\ref{periodsstardGc}), (\ref{periodsstardGd}), (\ref{etaAB}),
$$
a_k=\oint_{\alpha_k}\eta-\oint_{\alpha_k}*dG^\omega =A_k-\int_\gamma dU_{\alpha_k}=A_k-\sum_{j=1}^n \Gamma_j U_{\alpha_k}(z_j),
$$
$$
b_k=\oint_{\beta_k}\eta -\oint_{\beta_k}*dG^\omega =B_k- \int_\gamma dU_{\beta_k}=B_k-\sum_{j=1}^n \Gamma_j U_{\beta_k}(z_j).
$$
Thus
\begin{equation}\label{Ak}
A_k(z_1,\dots, z_n)=a_k+\sum_{j=1}^n \Gamma_j U_{\alpha_k}(z_j),
\end{equation}
\begin{equation}\label{Bk}
B_k(z_1,\dots,z_n)=b_k+\sum_{j=1}^n \Gamma_j U_{\beta_k}(z_j).
\end{equation}
showing how $\eta$ depends, via (\ref{etaAB}), on the parameters $z_1,\dots, z_n$. 
Taking the derivative with respect to $z_1$ (for example) gives
$$
\frac{\partial A_k}{\partial z_1}= \Gamma_1\,\frac{\partial U_{\alpha_k}(z_1)}{\partial z_1},
$$
$$
\frac{\partial B_k}{\partial z_1}= \Gamma_1\,\frac{\partial U_{\beta_k}(z_1)}{\partial z_1}.
$$
This will be used in the Section~\ref{sec:hamiltoneq}.

Now using (\ref{etaAB}), (\ref{U}), (\ref{Ak}), (\ref{Bk}) we can compute more explicitly the energy contribution from $\eta$.
We first notice that, by (\ref{bilinear}),
$$
\int_M dU_{\beta_k}\wedge*dU_{\beta_j}=-\oint_{\beta_k}*dU_{\beta_j}, \quad
 \int_M dU_{\beta_k}\wedge*dU_{\alpha_j}=-\oint_{\beta_k}*dU_{\alpha_j},
$$
$$
\int_M dU_{\alpha_k}\wedge*dU_{\beta_j}=-\oint_{\alpha_k}*dU_{\beta_j}, \quad
 \int_M dU_{\alpha_k}\wedge*dU_{\alpha_j}=-\oint_{\alpha_k}*dU_{\alpha_j}.
$$
Since $(\omega_1,\omega_2)_1=\int_M \omega_1\wedge *\omega_2$ is an inner product on $1$-forms, in particular on the $2\texttt{g}$
dimensional space of harmonic forms, for which we have used $(-dU_{\beta_k})_{k=1}^\texttt{g}$, $(dU_{\alpha_k})_{k=1}^\texttt{g}$ 
as a basis in (\ref{etaAB}) (see also (\ref{taualpha}), (\ref{taubeta})), it follows that the matrix (written here in block form)
$$
\left( \begin{array}{cc}
                                            ( \int_M dU_{\beta_k}\wedge*dU_{\beta_j}) & (- \int_M dU_{\beta_k}\wedge*dU_{\alpha_j})  \\
                                              (-\int_M dU_{\alpha_k}\wedge*dU_{\beta_j}) & (  \int_M dU_{\alpha_k}\wedge*dU_{\alpha_j})  \\
\end{array} \right)=
$$
\begin{equation}\label{ointdU}
=
\left( \begin{array}{cc}
                                             (- \oint_{\beta_k}*dU_{\beta_j}) & ( \oint_{\beta_k}*dU_{\alpha_j})  \\
                                              ( \oint_{\alpha_k}*dU_{\beta_j}) & ( -\oint_{\alpha_k}*dU_{\alpha_j})  \\
\end{array} \right)
\end{equation}
is symmetric and positive definite. 
Using  (\ref{etaAB}) we can then write the energy of $\eta$ as a quadratic form:
$$
\int_M \eta \wedge *\eta= \sum_{k=1}^{\texttt g}(\oint_{\alpha_k}\eta\oint_{\beta_k}*\eta-\oint_{\beta_k}\eta\oint_{\alpha_k}*\eta)=
$$
$$ 
+
\left(
 \begin{array}{cc}
(A_k),&(B_k)
\end{array}\right)
\left( \begin{array}{cc}
                                             (- \oint_{\beta_k}*dU_{\beta_j}) & ( \oint_{\beta_k}*dU_{\alpha_j})  \\
                                              ( \oint_{\alpha_k}*dU_{\beta_j}) & ( -\oint_{\alpha_k}*dU_{\alpha_j})  \\
\end{array} \right)
\left(\begin{array}{c}
(A_j)\\
(B_j)
\end{array}\right).
$$

Summarizing for the full Hamiltonian we have:
\begin{equation}\label{H}
2H(z_1,\dots,z_n)=
\end{equation}
$$ 
\left( \begin{array}{cccc}
\Gamma_1,&\Gamma_2& \ldots &\Gamma_n
\end{array}\right)
\left( \begin{array}{cccc}
                                             R(z_1)& G(z_1,z_2)&\ldots &G(z_1,z_n)  \\
                                              G(z_2,z_1) & R(z_2)&\ldots &G(z_2,z_n) \\
\vdots& \vdots & \ddots &\vdots\\
G(z_n,z_1)& G(z_n, z_2)&\dots &R(z_n)
\end{array} \right)
\left(\begin{array}{c}
\Gamma_1\\
\Gamma_2\\
\vdots\\
\Gamma_n
\end{array}\right)+
$$
$$ 
+
\left( \begin{array}{cc}
(A_k),&(B_k)
\end{array}\right)
\left( \begin{array}{cc}
                                             (- \oint_{\beta_k}*dU_{\beta_j}) & ( \oint_{\beta_k}*dU_{\alpha_j})  \\
                                              ( \oint_{\alpha_k}*dU_{\beta_j}) & ( -\oint_{\alpha_k}*dU_{\alpha_j})  \\
\end{array} \right)
\left(\begin{array}{c}
(A_j)\\
(B_j)
\end{array}\right),
$$
where we have set
$$
R(z)= \frac{1}{2\pi} (h_0(z)+\log\lambda(z))
$$
for what is the same as the Robin function defined in terms of invariant distance. Clearly the first term in $H$ agrees with the Hamiltonian
given in \cite{Dritschel-Boatto-2015}.


\section{Hamilton's equations}\label{sec:hamiltoneq}

The equations of motion for the vortices should be obtained by differentiating $H(z_1,\dots,z_n)$ with respect to the $\bar{z}_k$.
More accurately, the formulation of Hamilton's equation requires, besides the Hamiltonian function itself, the introduction of a phase 
space provided with a symplectic form. The phase space consists in our case of all possible configurations of the vortices,
collisions not allowed, i.e., we take it to be
$$
P=\{(z_1,\dots,z_n): z_j\in M, z_k\ne z_j \text{ for } k\ne j\},
$$
and the symplectic form on $P$ is to be
$$
\Omega=\sum_{k=1}^n \Gamma_k{\rm vol}^2(z_k)=-\frac{1}{2 \I}\sum_{k=1}^n \Gamma_k\lambda(z_k)^2 dz_k\wedge d\bar{z}_k.
$$
These are standard choices, known from \cite{Hally-1980, Boatto-Koiller-2008, Boatto-Koiller-2013}, for example.

The Hamilton equations are, in general terms (see \cite{Arnold-1978, Arnold-Khesin-1998, Frankel-2012}) and with a for us suitable
choice of sign,
$$
i(\frac{d}{dt}(z_1,\dots,z_n))\Omega= dH.
$$
This spells out, in our case, to
\begin{equation}\label{hamilton}
\Gamma_k\lambda(z_k)^2 \frac{dz_k}{dt}=-2 \I \frac{H(z_1,\dots, z_n)}{\partial \bar{z}_k}.
\end{equation}

\begin{theorem}\label{thm:hamiltoneq}
The Hamiltonian equations (\ref{hamilton}) agree with the vortex dynamics
given in Theorem~\ref{thm:dzkdt}.
\end{theorem}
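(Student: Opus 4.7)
The plan is to compute $\partial H/\partial\bar z_k$ directly from (\ref{H}) and verify that it reproduces the vortex velocity supplied by Theorem~\ref{thm:dzkdt}. First reformulate (\ref{dzkdt}): substituting (\ref{cGammah1}), taking complex conjugates termwise (for the real functions $h_0$, $\log\lambda$, $G(\cdot,z_j)$, $U^*$ one has $\overline{\partial f/\partial z}=\partial f/\partial\bar z$), multiplying by $\Gamma_k$, and invoking $R(z)=(h_0(z)+\log\lambda(z))/(2\pi)$, one finds that Hamilton's equation (\ref{hamilton}) is equivalent to the identity
\begin{equation*}
2\,\frac{\partial H}{\partial\bar z_k} \;=\; \Gamma_k^{\,2}\,\frac{\partial R(z_k)}{\partial\bar z_k}\;+\;2\Gamma_k\!\!\sum_{j\neq k}\!\Gamma_j\,\frac{\partial G(z_k,z_j)}{\partial\bar z_k}\;+\;2\Gamma_k\,\frac{\partial U^*(z)}{\partial\bar z}\bigg|_{z=z_k},
\end{equation*}
where in the last term the derivative is spatial (with parameters held fixed, consistent with (\ref{cGammah1})).

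Differentiating the first (Green--Robin) block of (\ref{H}) is immediate: the diagonal entry $\Gamma_k^{\,2}R(z_k)$ contributes the first term on the right, and the $k$-th row together with the $k$-th column of off-diagonal Green entries give the second, using the symmetry $G(z,w)=G(w,z)$. The remaining task is to show that the harmonic block $Q=(\eta,\eta)_1$ satisfies $\partial Q/\partial\bar z_k=2\Gamma_k\,\partial U^*/\partial\bar z|_{z=z_k}$.

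By bilinearity and the symmetry of $(\cdot,\cdot)_1$, $\partial Q/\partial\bar z_k = 2(\partial\eta/\partial\bar z_k,\eta)_1$. The parameter derivative $\partial\eta/\partial\bar z_k$ acts only on the coefficients $A_\ell,B_\ell$ of (\ref{etaAB}); by (\ref{Ak})--(\ref{Bk}) it is a harmonic one-form with $\alpha_j$-period $\Gamma_k\,\partial U_{\alpha_j}(z_k)/\partial\bar z_k$ and $\beta_j$-period $\Gamma_k\,\partial U_{\beta_j}(z_k)/\partial\bar z_k$. Applying the bilinear relation (\ref{bilinear}) to the closed forms $\partial\eta/\partial\bar z_k$ and $*\eta$ yields
\begin{equation*}
\Bigl(\frac{\partial\eta}{\partial\bar z_k},\eta\Bigr)_1 \;=\; \Gamma_k\sum_{j}\Bigl[\frac{\partial U_{\alpha_j}(z_k)}{\partial\bar z_k}\oint_{\beta_j}\!{*\eta}\;-\;\frac{\partial U_{\beta_j}(z_k)}{\partial\bar z_k}\oint_{\alpha_j}\!{*\eta}\Bigr].
\end{equation*}
To match this with $\Gamma_k\,\partial U^*(z)/\partial\bar z|_{z=z_k}$, expand the harmonic one-form $*\eta=dU^*$ in the basis $\{-dU_{\beta_j},dU_{\alpha_j}\}$ dual to the homology basis (see (\ref{taualpha})--(\ref{taubeta})): the coefficients are precisely $\oint_{\alpha_j}{*\eta}$ and $\oint_{\beta_j}{*\eta}$. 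Reading off the $d\bar z$-component and evaluating at $z=z_k$ reproduces the bracketed sum above. The main obstacle is this very bridging step: the dependence of $\eta$ on the parameter $z_k$ enters through $U_{\alpha_\ell}(z_k),U_{\beta_\ell}(z_k)$ in (\ref{Ak})--(\ref{Bk}), while $U^*$ enters spatially; the two derivatives are reconciled only because the basis $\{dU_{\alpha_j},dU_{\beta_j}\}$ plays a self-dual role in both expansions.
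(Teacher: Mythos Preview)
Your argument is correct. The route differs from the paper's only in how the harmonic block is handled: the paper keeps $U^*$ in the form $-\sum_j A_jU_{\beta_j}^*+\sum_j B_jU_{\alpha_j}^*$ (coefficients $A_j,B_j$, basis $\{*dU_{\alpha_j},*dU_{\beta_j}\}$) and is then forced to verify the identity
\[
\sum_{j}dU_{\alpha_j}\oint_{\beta_k}*dU_{\beta_j}-\sum_{j}dU_{\beta_j}\oint_{\beta_k}*dU_{\alpha_j}=*dU_{\beta_k}
\]
by checking periods and invoking the symmetry of the period matrix (\ref{ointdU}). You instead expand $*\eta$ directly in the dual basis $\{-dU_{\beta_j},dU_{\alpha_j}\}$, so the bilinear relation (\ref{bilinear}) applied to $\partial\eta/\partial\bar z_k$ and $*\eta$ matches $\partial U^*/\partial\bar z$ immediately, without a separate period computation. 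This is a genuine (if modest) economy: the paper's identity is precisely the change-of-basis formula that your expansion of $*\eta$ already encodes. Both approaches rely on the same underlying facts—closedness of $*\eta$ and the period relations (\ref{taualpha})--(\ref{taubeta})—and arrive at the same conclusion.
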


\begin{proof}
Choosing $k=1$, for example, in (\ref{hamilton}) gives, on using (\ref{c1c0}) together with the symmetries of $G(z_k,z_j)$ and of the 
quadratic form representing the energy of $\eta$,
$$
-2\I\frac{\partial H(z_1,\dots,z_n)}{\partial  \bar{z}_1}=
$$
$$
=\frac{\Gamma_1^2}{2\pi\I}\big(\overline{h_1(z_1)}+\frac{\partial}{\partial \bar{z}_1}\log\lambda (z_1)\big)
-2\I\Gamma_1\sum_{j=2}^n \Gamma_j\frac{\partial G(z_1,z_j)}{\partial \bar{z}_1}+
$$
$$
+2\I\sum_{k,j=1}^\texttt{g}A_k\frac{\partial A_j}{\partial \bar{z}_1}  \oint_{\beta_k}  *dU_{\beta_j}
-2\I\sum_{k,j=1}^\texttt{g}A_k\frac{\partial B_j}{\partial \bar{z}_1}  \oint_{\beta_k}  *dU_{\alpha_j}-
$$
$$
-2\I\sum_{k,j=1}^\texttt{g}B_k\frac{\partial A_j}{\partial \bar{z}_1} \oint_{\alpha_k}  *dU_{\beta_j}
+2\I\sum_{k,j=1}^\texttt{g}B_k\frac{\partial B_j}{\partial \bar{z}_1} \oint_{\alpha_k}  *dU_{\alpha_j}= 
$$
$$
=\frac{\Gamma_1^2}{2\pi\I}\big(\overline{h_1(z_1)}+\frac{\partial}{\partial \bar{z}_1}\log\lambda (z_1)\big)
-2\I\Gamma_1\sum_{j=2}^n \Gamma_j\frac{\partial G(z_1,z_j)}{\partial \bar{z}_1}+
$$
$$
+2\I\Gamma_1\sum_{k,j=1}^\texttt{g}A_k\frac{\partial U_{\alpha_j}(z_1)}{\partial \bar{z}_1}
 \oint_{\beta_k}  *dU_{\beta_j}
-2\I\Gamma_1\sum_{k,j=1}^\texttt{g}A_k\frac{\partial U_{\beta_j}(z_1)}{\partial \bar{z}_1}
 \oint_{\beta_k}  *dU_{\alpha_j}-
$$
$$
-2\I\Gamma_1\sum_{k,j=1}^\texttt{g}B_k\frac{\partial U_{\alpha_j}(z_1)}{\partial \bar{z}_1}
 \oint_{\alpha_k}  *dU_{\beta_j}
+2\I\Gamma_1\sum_{k,j=1}^\texttt{g}B_k\frac{\partial U_{\beta_j}(z_1)}{\partial \bar{z}_1}
 \oint_{\alpha_k}  *dU_{\alpha_j}.
$$

This is to be compared with what comes out from Theorem~\ref{thm:dzkdt} combined with (\ref{cGammah1}), (\ref{U}), namely
$$
\Gamma_1\lambda(z_1)^2 \frac{dz_1}{dt}=
$$
$$
=\frac{\Gamma_1^2}{2\pi \I}\big(\overline{h_1(z_k)}+\frac{\partial}{\partial \bar{z_1}}\log \lambda (z_1)
+\sum_{j=2}^n \frac{4\pi \Gamma_j}{\Gamma_1}\frac{\partial G(z_1,z_j)}{\partial \bar{z}_1}
+\frac{4\pi}{\Gamma_1}\frac{\partial U^*(z_1)}{\partial \bar{z_1}}\big)
$$
$$
=\frac{\Gamma_1^2}{2\pi \I}(\overline{h_1(z_k)}+\frac{\partial}{\partial \bar{z_1}}\log \lambda (z_1))-
$$
$$
-2\I\sum_{j=2}^n {\Gamma_1\Gamma_j}\frac{\partial G(z_1,z_j)}{\partial \bar{z}_1}
+2{\I}\Gamma_1\sum_{k=1}^\texttt{g} A_k\frac{\partial U^*_{\beta_k}(z_1)}{\partial \bar{z}_1}
-2\I\Gamma_1\sum_{k=1}^\texttt{g}B_k\frac{\partial U^*_{\alpha_k}(z_1)}{\partial \bar{z}_1}.
$$

We see that we have agreement between the two versions of dynamics provided just the terms containing $A_k$ and $B_k$ match
individually, i.e. (choosing $A_k$ for example)
$$
\sum_{k,j=1}^\texttt{g}\frac{\partial U_{\alpha_j}(z_1)}{\partial \bar{z}_1}
 \oint_{\beta_k}  *dU_{\beta_j}
-\sum_{k,j=1}^\texttt{g}\frac{\partial U_{\beta_j}(z_1)}{\partial \bar{z}_1}
 \oint_{\beta_k}  *dU_{\alpha_j}
=\sum_{k=1}^\texttt{g} \frac{\partial U^*_{\beta_k}(z_1)}{\partial \bar{z}_1}.
$$
This is a special case of
$$
\sum_{j=1}^\texttt{g}d U_{\alpha_j}
 \oint_{\beta_k}  *dU_{\beta_j}
-\sum_{j=1}^\texttt{g}d U_{\beta_j}
 \oint_{\beta_k}  *dU_{\alpha_j}
= *dU_{\beta_k}.
$$
Such a linear relation between harmonic forms $dU_{\alpha_j}$, $dU_{\beta_j}$, $*dU_{\beta_k}$ holds if and only if
both members have the same integrals around all curves $\alpha_\ell$, $\beta_\ell$ in the homology basis. 
And this follows on using the period properties (\ref{taualpha}), (\ref{taubeta}) of the differentials above 
together with the symmetry of the matrix (\ref{ointdU}). This finishes the proof.
\end{proof}


\bibliography{bibliography_gbjorn.bib}

\end{document}